\documentclass[12pt,reqno]{amsart}
\usepackage[foot]{amsaddr}

\usepackage{amsmath,amssymb,amsfonts}

\usepackage{url}       %
\usepackage{xcolor}
\usepackage{soul}
\usepackage[margin=1in]{geometry}
\usepackage{graphicx}
\usepackage{amsthm}
\usepackage{setspace}
\usepackage{natbib}
\usepackage{todonotes}
\onehalfspacing

\usepackage{amsthm}
\usepackage{geometry}
\usepackage{graphicx}
\usepackage{multirow}
\usepackage{caption}
\usepackage{subcaption}
\usepackage{tabularx}
\usepackage[flushleft]{threeparttable}
\usepackage{array}
\usepackage{booktabs}
\usepackage{pdflscape}
\usepackage{lscape}
\usepackage{bbm}
\usepackage{color}
\usepackage{blkarray}

\usepackage{floatrow}

\newcommand{\vertii}[1]{{\left\vert\kern-0.25ex\left\vert #1 
		\right\vert\kern-0.25ex\right\vert}}

\newcommand{\vertiii}[1]{{\left\vert\kern-0.25ex\left\vert\kern-0.25ex\left\vert #1 
		\right\vert\kern-0.25ex\right\vert\kern-0.25ex\right\vert}}

\newtheorem{theorem}{Theorem}[section]

\newtheorem{assumption}{Assumption}[section]

\newtheorem{lemma}{Lemma}[section]

\newtheorem{proposition}{Proposition}[section]

\usepackage{soul} 

\usepackage{chngcntr}
\usepackage{apptools}
\AtAppendix{\numberwithin{theorem}{subsection}}
\AtAppendix{\numberwithin{equation}{subsection}}


\title[Estimation of High-Dimensional Seemingly Unrelated Regression Models]
{\bf Estimation of High-Dimensional Seemingly Unrelated Regression Models}

\date{\today}

\author{Lidan Tan}
\address{Tan: Department of Economics, University of Southern California}
\author{Khai X. Chiong} 
\address{Chiong: Naveen Jindal School of Management, University of Texas at Dallas}
\author{Hyungsik Roger Moon} \thanks{Tan: lidantan@usc.edu, Chiong: Khai.Chiong@utdallas.edu, Moon: moonr@usc.edu, corresponding author.}
\address{Moon: Department of Economics, University of Southern California and School of Economics, Yonsei University} 

\begin{document}
	
	\begin{abstract}
	In this paper, we investigate seemingly unrelated regression (SUR) models that allow the number of equations ($N$) to be large, and to be comparable to the number of the observations in each equation ($T$). It is well known in the literature that the conventional SUR estimator, for example, the generalized least squares (GLS) estimator of \cite{zellner1962efficient} does not perform well. As the main contribution of the paper, we propose a new feasible GLS estimator called the feasible graphical lasso (FGLasso) estimator. For a feasible implementation of the GLS estimator, we use the graphical lasso estimation of the precision matrix (the inverse of the covariance matrix of the equation system errors) assuming that the underlying unknown precision matrix is sparse. We derive asymptotic theories of the new estimator and investigate its finite sample properties via Monte-Carlo simulations.
	
	\bigskip
	
	{ \sc Keywords: Graphical Lasso, High Dimensional Matrix Estimation, Precision Matrix, Seemingly Unrelated Regression, Feasible Graphical Lasso Estimator}
\end{abstract}
		
\maketitle

	\section{Introduction}
	
	A SUR comprises multiple individual regression equations that are correlated with each other. In our setup, we assume that there are $N$ regression equations that are observed over periods $t=1,2,...,T$. These regression equations are related in the sense that the regression errors of the equation system are correlated.
	
	The SUR estimator originally proposed by \cite{zellner1962efficient} is a feasible generalized least squares (FGLS) estimator that is based on an estimator of the inverse of the covariance matrix, the precision matrix ($\Omega := \Sigma^{-1}$), of the SUR equation system. Often this estimator is computed in two steps. In the first step, one estimates each equation by the ordinary least squares (OLS) and computes the residuals. In the second step, one computes the FGLS based on the inverse of the sample covariance matrix of the residuals.
	
	It is well-known that when the number of equations, $N$, is large, the FGLS estimator performs poorly (e.g., \cite{greene2003econometric}). 	The main reason is when $N$ is large relative to the number of observations $T$, the sample precision matrix (inverting the sample covariance matrix) performs poorly.   For example, when $N/T \to c>0$, the empirical covariance matrix is not consistent and would be rank deficient when $N>T$ (e.g., see \cite{johnstone2001distribution} and \cite{hastie2015statistical}).  
	In this paper, we revisit the problem of estimating the classical SUR model when the number of regression equations, $N$, is large and even comparable to $T$.
	
	The problem of estimating high dimensional precision matrices has been widely studied in the machine learning and statistical learning literature.  A popular method is to estimate the precision matrix with various regularizations -- imposing various restrictions that the precision matrix is sparse. (e.g., see \cite{cai2011constrained}, \cite{negahban2011estimation}, \cite{friedman2008sparse}, \cite{lam2009sparsistency}.) 
	
	It is well known that the sparsity of the precision matrix has a nice interpretation if the underlying distribution is Gaussian - the set of non-zero entries in the precision matrix correspond to the set of edges in an associated Gaussian Markov random field (GMRF) (see \cite{hastie2015statistical}). Thus imposing sparsity on the precision matrix corresponds to the assumption that not all regression equations are related to each other. In another words, the graph representing which regression equation is related to which, is sparse.\footnote{\cite{liu2009nonparanormal} relaxed the results to a more general class of distribution named as non-paranormal distribution.} 	 This restriction is reasonable -- a prevailing result in the literature of social and economic networks is that these graphs are sparse, and the degrees of the nodes  grow much slower than the network size $N$ (\cite{barabasi2016network}).
	
	In this paper, assuming sparsity of the true precision matrix $\Omega$, we propose a new estimator called the FGLasso that works when both $N,T \to \infty$ and under some conditions, $N \geq T$.  Unlike FGLS which estimates $\Sigma$ by the OLS residuals and then taking the inverse, we directly estimate the precision matrix $\Omega:=\Sigma^{-1}$ using valid high-dimensional techniques. In particular, we directly estimate the  precision matrix using the  {\em Graphical Lasso} estimator, which is a popular estimator for the  high-dimensional precision matrix (see \cite{fan2016overview}).

Our results are as follows. We find a set of regularity conditions under which our FGLasso estimator is asymptotically equivalent to the (infeasible) GLS estimator uniformly across equations. Moreover, we  show that if the maximum nonzero entries per row in $\Omega$ is bounded or grows much smaller than $N$, FGLasso estimator performs well even when $N > T$. In the Monte-Carlo study, we compare the performance of the FGLasso estimator with the OLS estimator, the GLS estimator, and the FGLS estimator, corroborating our findings that our proposed estimator performs better in a high-dimensional setting. In deriving these results, we build upon \cite{ravikumar2011high}, which show that under certain regularity conditions, the graphical lasso estimator $\widehat{\Omega}_{gl}$ converges to the true $\Omega$ at the rate $\mathcal{O}_p(\sqrt{\log N/T})$, in terms of the element-wise maximum norm, while preserving the sparsity pattern.
	
	The remainder of this paper is organized as follows. Section 2 discusses the SUR model in details, summarizes the OLS, the GLS, the FGLS, and the FGLasso estimators. In section 3, we discuss the results from \cite{ravikumar2011high} and present the main theoretical results. Section 4 reports Monte Carlo simulation results\footnote{Computation for the work described in this paper was supported by the University of Southern California’s Center for High-Performance Computing (hpc.usc.edu).} and section 5 concludes. All the technical proofs and additional simulation results are provided in the appendix. 
	
	\section*{Notation}
	For the convenience, we briefly summarize the notation to be used throughout the paper. We denote $s_{\mathrm{min}}(A)$ and $s_{\mathrm{max}}(A)$ as the maximum and the minimum singular values of real valued matrix $A\in \mathbb{R}^{m \times n}$, respectively. The operator norm and Frobenius norm are defined as $\vertii{A}_{op} = s_{\mathrm{max}}(A)$ and  $\vertii{A}_F=\sqrt{\sum_{i,j}A_{ij}^2}$, respectively. Let $\vertii{A}_\infty = \underset{i,j}{\mathrm{max}}|A_{i,j}|$ denote the element-wise maximum norm and  $\vertiii{A}_\infty = \underset{i=1,2,...,L}{\mathrm{max}}\sum_{j=1}^{N'}|A_{ij}|$the maximum absolute row sum matrix norm. Let $A'$ denote the transpose of $A$ and $\otimes$ denote the Kronecker product. For a real sequence $\{a_n\}_{n=1}^\infty$ and a positive sequence $\{b_n\}_{n=1}^\infty$, we denote $a_n=\mathcal{O}(b_n)$ if there exists a finite constant $C$ such that $|a_n| \leq C b_n$ as $n \to \infty$, and $a_n = \mathcal{O}_p(b_n)$ if $\mathbb{P}(|a_n| \leq C b_n) \to 1$ as $n \to \infty$. We use notation $\Rightarrow$ and $\xrightarrow{p}$ to denote the convergence in distribution and the convergence in probability, respectively.

	\section{Setup}
	
	\subsection{SUR model}
	 Suppose we estimate a system of linear equations: 
	\begin{equation}	\label{m.single equation}
	Y_{it} = \beta_i' X_{it} + U_{it}
	\end{equation}
	for $i = 1,\cdots,N$ and $t=1,\cdots,T$. Here $X_{it}=(X_{it,1},X_{it,2},\cdots,X_{it,K_i})'$ is a $K_i$- column vector of the regressors for unit $i$, and $U_{it}$ is the unobserved error term. The heterogeneous regression coefficients $\beta_i \in \mathbb{R}^{K_i \times 1}$ are the parameters of interest. 
	
	Stacking the observations over $N$ units, let $Y_t = (Y_{1t},...,Y_{Nt})' \in \mathbb{R}^N $, $U_t = (U_{1t},...,U_{Nt})' \in \mathbb{R}^N $, $X_t = diag(X_{1t},...,X_{Nt}) \in \mathbb{R}^{N \times \sum_{i=1}^N K_i}$, and $\beta = (\beta_1',...,\beta_N')' \in \mathbb{R}^{\sum_{i=1}^N K_i}$. The system of the equations in (\ref{m.single equation}) can be expressed as 
	\begin{equation}	\label{m.system}
	Y_t = X_t' \beta + U_t
	\end{equation}  
	Alternatively, stacking the observations in (\ref{m.single equation}) over $t$, we can also express the system of the equations in (\ref{m.single equation}) as
	\begin{equation}
	Y_i=X_i \beta_i+U_i,	\label{m.stackoverT}
	\end{equation}
	where $Y_i=(Y_{i1},Y_{i2},\cdots,Y_{iT})'\in \mathbb{R}^T$, $X_i=(X_{i1},X_{i2},\cdots,X_{iT})'\in \mathbb{R}^{T \times K}$ and $U_i=(U_{i1},U_{i2},\cdots,U_{iT})'\in \mathbb{R}^T$. 
	
	In a matrix form, we can write the model as
	\begin{equation}
	Y=X \beta+U,	\label{m.fullequation}
	\end{equation}
	where $Y=(Y_1',Y_2',\cdots,Y_N')' \in \mathbb{R}^{NT}$, $U=(U_1', U_2',\cdots, U_N')'\in \mathbb{R}^{NT}$, \[X=\begin{bmatrix}
	X_{1} & 0  & 0 & \dots  & 0 \\
	0 & X_{2} & 0 & \dots  & 0 \\
	\vdots & \vdots & \vdots & \ddots & \vdots \\
	0 & 0 & 0 & \dots  & X_{N}
	\end{bmatrix}_{TN \times \sum_{i=1}^N K_i} \text{and} \quad \beta=\begin{bmatrix}
	\beta_1 \\
	\beta_2 \\
	\vdots \\
	\beta_N
	\end{bmatrix} .\]
	
	In this paper, we assume the classical linear system equation assumptions:
	\begin{assumption}[\bf Model]	\label{assumption1} We assume:
		\begin{itemize}
			\item[(i)] $X$ is a full rank matrix, and at least there exists a pair $(i,j)$ such that $X_{i} \neq X_{j}$. 
			\item[(ii)] $E(U_t|X_1,\cdots,X_T)=0$.
			\item[(iii)] $E(U_tU_t'|X_1,\cdots,X_T)=\Sigma>0$.
		\end{itemize}
	\end{assumption}
	
	The conditions in Assumption \ref{assumption1} are quite classical in the SUR literature. The first condition excludes the case all the regressors are identical. In this case, it is well known that the OLS estimator becomes efficient and there is no gain of using the information in the equation system. The second condition assumes that the regressors are strictly exogenous and it excludes regressors like lagged dependent variables. This assumption may be restrictive in many applications, but we need this assumption as a technical regularity condition that is required for proving the asymptotic properties of $GLS$ estimator and $FGLasso$ estimator when $N,T\to \infty$ (see details in Appendix A.). The third condition is the homoskedasticity assumption. This condition may be relaxed to allow for conditional heteroskedasticity but at the cost of technical complexity of the asymptotic results of the paper. We assume it just for simplicity in deriving asymptotic results.

	Without loss of generality, in the rest of the paper, we assume the number of regressors of each unit is the same, i.e. $K_i=K,$ $i=1,2,\hdots,N$, a constant number. The case in which $K_i's$ are different can be easily extended. The results remain the same as long as $K_i=\mathcal{O}(1)$ for all $i$.
	
	\subsection{Estimators}
	In this section, we first briefly summarize the OLS, the GLS, and the FGLS estimators of $\beta$ in the SUR model. Then we introduce the FGLasso estimator. 
	
	The OLS estimator is defined as
	\begin{equation} \label{e.ols}
	\widehat{\beta}_{OLS}=\Big(\sum_{i=1}^TX_tX_t'\Big)^{-1}\sum_{t=1}^TX_tY_t .
	\end{equation}
	It is equivalent to the OLS estimators of individual equations,\\  $\widehat{\beta}_{OLS}=(\widehat{\beta}_{1,OLS}',\widehat{\beta}_{2,OLS}',\cdots,\widehat{\beta}_{N,OLS}')'$, where $\widehat{\beta}_{i,OLS}=\big(X_i' X_i\big)^{-1} \big(X_i' Y_i\big)$ for $i=1,2,...,N$.
	
	In his seminar paper, \cite{zellner1962efficient} proposed the SUR estimator to improve the OLS estimator by exploiting the correlation in the equation system. Suppose that $\Sigma$ is known. As earlier, define the precision matrix as $\Omega:=\Sigma^{-1}$. Then GLS estimator is defined as
	\begin{equation} \label{e.gls}
	\widehat{\beta}_{GLS}=\Big(\sum_{t=1}^TX_t\Omega X_t'\Big)^{-1}\sum_{t=1}^TX_t\Omega Y_t.
	\end{equation}
	In most applications, however, $\Sigma$ and $\Omega$ are not known. A FGLS estimator (see details in \cite{greene2003econometric}) is defined by replacing the unknown $\Sigma$ with the consistent estimator. A widely used estimator of $\Sigma$ is $\widehat{\Sigma}=\frac{1}{T}\sum_{t=1}^T\widehat{U}_t\widehat{U}_t'$ and $\widehat{U}_t$ is the $OLS$ residuals, that is, $\widehat{U}_t=Y_t-X_t'\widehat{\beta}_{OLS}$. Then
	\begin{equation} 	\label{e.fgls}
	\widehat{\beta}_{FGLS}=\Big(\sum_{t=1}^TX_t\widehat{\Sigma}^{-1}X_t'\Big)^{-1}\sum_{t=1}^TX_t\widehat{\Sigma}^{-1} Y_t.
	\end{equation}
	
	The FLGS estimator in (\ref{e.fgls}) suffers from two major problems when $N$ is large.
	Suppose that $T>N$, but both $T$ and $N$ are large and in the same order. It is known that the estimator $\widehat{\Omega}=\widehat{\Sigma}^{-1}$ behaves poorly. Further, $\widehat{\Sigma}^{-1}$ is only well defined when $T \geq N$. When $T$ is less than $N$, $\widehat{\Sigma}$ is rank deficient and therefore not invertible. 
	
	Our estimator is motivated by these two issues. Suppose that $\Omega$ is sparse. In this case, we propose FGLasso estimator by replacing $\widehat{\Omega} = \widehat{\Sigma}^{-1}$ in (\ref{e.fgls}) with a graphical lasso estimator, $\widehat{\Omega}_{gl}$, where
	\begin{equation}	 \label{e.gl}
	\widehat{\Omega}_{gl}=\underset{\Omega>0, \Omega^T=\Omega} {\mathrm{argmin}}\big\{\mathrm{tr}(\Omega\widehat{\Sigma})-\mathrm{logdet}\Omega+\lambda_n \vertii{\Omega}_{1,\textrm{off}}\big\},
	\end{equation}
	$\vertii{\Omega}_{1,\textrm{off}}=\sum_{i\neq j}^N |\Omega_{ij}|$ and $\lambda_n>0$ is a penalization parameter which is often chosen by a cross-validation method (e.g., see \cite{friedman2008sparse}). More specifically,  
	\begin{equation} \label{e.fglasso}
	\widehat{\beta}_{FGLasso}= \left( \sum_{t=1}^TX_t\widehat{\Omega}_{gl}X_t'\right)^{-1}
	\left( \sum_{t=1}^TX_t\widehat{\Omega}_{gl}Y_t \right).
	\end{equation}
	
	In the next section, under a certain restriction on the sparse structure of $\Omega$, we show that $\widehat{\Omega}_{gl}$ is consistent even when $T\leq N$, and $\widehat{\beta}_{FGLasso}$ is consistent and has similar asymptotic properties as $\widehat{\beta}_{GLS}$.

\section{Asymptotic Properties of FGLasso Estimator}

	The sample properties of the FGLasso estimator is mainly dependent on the sample properties of $\widehat{\Omega}_{gl}$ in (\ref{e.gl}). Intuitively, if $\widehat{\Omega}_{gl}$ is close to true $\Omega$ in some metric, then $\widehat{\beta}_{FGLasso}$ would also be close to $\widehat{\beta}_{GLS}$. In section 3.1, we briefly summarize results in \cite{ravikumar2011high} regarding the properties of $\widehat{\Omega}_{gl}$. In section 3.2, we show the consistency and asymptotic properties of $\widehat{\beta}_{FGLasso}$.
	
	\subsection{Properties of $\widehat{\Omega}_{gl}$}
	
	
		Define $\Gamma^*=\Omega \otimes \Omega$ and let $E(\Omega)$ be the edge set including self-link, i.e $E(\Omega):=\{(i,j)|\Omega_{ij}\neq 0\}$. Denote $S^c$ be the complement of S set. For set $P,P'$, we also use $\Gamma^*_{PP'}$ to denote the $|P|\times |P'|$ matrix with rows and columns of $\Gamma^*$ indexed by $P$  and $P'$ respectively. $\kappa_{\Sigma}:=\vertiii{\Sigma}_\infty$ and $\kappa_{\Gamma}:=\vertiii{(\Gamma^*_{SS})^{-1}}_\infty$.
	\begin{assumption} \label{p.assumption2} ~\ We assume 
		\begin{itemize}
			\item[(i)] There are at most $D_N$ nonzero entries per row in $\Omega$.
			\item[(ii)] Conditional on $X$, $U_{it}/\Sigma_{ii}$ is i.i.d sub-Gaussian over $t$ with parameter $\sigma$, i.e.
			\begin{equation}	\label{p.def_subGaussian}
			\mathbb{E}(e^{\lambda (U_{it}/\Sigma_{ii})})\leq e^{\lambda^2 \sigma^2}, \quad \text{for all }\lambda\in \mathbb{R}.
			\end{equation}
			\item[(iii)] There exists some $\alpha \in (0,1]$ such that
			\begin{equation}  \label{p.incohereance condition}
			\underset{e\in S^c}{\mathrm{max}}\vertii{\Gamma^*_{eS}(\Gamma^*_{SS})^{-1}}_1\leq 1-\alpha.
			\end{equation}
			\item[(iv)] $(\kappa_{\Gamma^*}, \kappa_{\Sigma}, \alpha,\sigma)$ remain constant as a function of $(N,T)$.
		\end{itemize}
	\end{assumption}
	
	 The first condition assumes that $D_N$ is an upper bound of the number of nonzero elements in $\Omega$. It usually refers to the maximum degree of nodes in graph theory literature. In this paper, we do not put a restriction that a finite constants bounds $D_N$. We allow $D_N$ to increase slowly to infinity as $N$ increases. The second condition assumes that the distribution of $U_{it}$ has a thin tail like the Gaussian distribution. The third assumption is usually referred as the incoherence condition, which guarantees the exact recovery of $\widehat{\Omega}_{gl}$. The condition that $\kappa_\Sigma$ remains constant in the third assumption indicates that the singular value of true precision matrix $\Omega$ is lower bounded\footnote{Note $s_{min}(\Omega)=\frac{1}{s_{max}(\Sigma)}\geq \frac{1}{\vertiii{\Sigma}_\infty}$.}. The assumption that the rest of the parameters $(\kappa_{\Gamma^*},\alpha,\sigma)$ remain constant is only for simplicity, more detailed results and discussion can be found in Corollary 1 in \cite{ravikumar2011high}.

	\begin{lemma}[\bf \cite{ravikumar2011high}] \label{lemma.ravikumar}
		Assume Assumption \ref{p.assumption2} holds. If $T \geq c D_N^2 \log N$ for constant $c>0$, then the optimal solution $\widehat{\Omega}_{gl}$ in (\ref{e.gl}) satisfies:
		\begin{itemize}
			\item[(i)] \begin{equation} \label{p.ravikumar}
			\vertii{\widehat{\Omega}_{gl}-\Omega}_\infty=\mathcal{O}_p\left(\sqrt{\frac{\log N}{T}}\right);
			\end{equation}
			
			\item[(ii)] The edge set $E(\widehat{\Omega}_{gl})$ is a subset of the true edge set $E(\Omega)$, and includes all edges $(i,j)$ with $|\Omega^*|>c'\sqrt{\frac{\log N}{T}}$, where $c'>0$ is a small constant number that depends on $\sigma, \alpha, \kappa_\Gamma^*$ and $\mathrm{max}_i(\Sigma_{\mathrm{ii}})$.
		\end{itemize}
	\end{lemma}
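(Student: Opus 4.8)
The plan is to recognize that this lemma is a verbatim restatement of the main consistency and sparsistency theorem of \cite{ravikumar2011high} (their Theorem 1 together with Corollary 1), so the proof is really a two-part task: first verify that Assumption \ref{p.assumption2} supplies exactly the hypotheses their argument requires, and then transcribe their conclusions under the dictionary $p \mapsto N$, $n \mapsto T$, and maximum degree $d \mapsto D_N$. Under this dictionary their sub-Gaussian tail condition is our (ii), their irrepresentability condition $\max_{e \in S^c}\vertii{\Gamma^*_{eS}(\Gamma^*_{SS})^{-1}}_1 \leq 1-\alpha$ is our (iii), and the requirement that $\kappa_\Sigma$, $\kappa_{\Gamma^*}$, $\alpha$, $\sigma$ not vary with $(N,T)$ is our (iv). Their sample-size requirement $n \gtrsim d^2 \log p$ then reads $T \geq c D_N^2 \log N$, their elementwise rate $\mathcal O_p(\sqrt{\log p/n})$ becomes $\mathcal O_p(\sqrt{\log N/T})$, giving (i), and their support-recovery guarantee becomes (ii).

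For completeness I would reproduce the skeleton of their \emph{primal--dual witness} argument, which is the substantive content. The first step is a concentration bound for the sample covariance: under the sub-Gaussian assumption (ii), each entry $\widehat\Sigma_{ij}-\Sigma_{ij}$ is a centered sub-exponential variable (a sum of products of sub-Gaussians), so a Bernstein-type tail bound together with a union bound over the $\mathcal O(N^2)$ entries yields $\vertii{\widehat\Sigma-\Sigma}_\infty = \mathcal O_p(\sqrt{\log N/T})$; call this deviation $\delta$. The second step constructs a candidate solution $\widetilde\Omega$ by solving the graphical-lasso program restricted to the true support $S = E(\Omega)$, i.e. forcing $\widetilde\Omega_{jk}=0$ for $(j,k)\notin S$, and choosing the dual subgradient matrix so that stationarity holds on $S$. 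The third step verifies \emph{strict dual feasibility}, namely that the induced subgradient satisfies $\vertii{\widetilde Z_{S^c}}_\infty < 1$; this certifies that the restricted solution is in fact the unique optimum $\widehat\Omega_{gl}$ of the full program, which immediately gives the containment half of (ii), $E(\widehat\Omega_{gl}) \subseteq E(\Omega)$.

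The fourth step controls the elementwise error. Writing the stationarity condition as a perturbed version of the population identity $\Sigma - \Omega^{-1}=0$ and Taylor-expanding the $-\log\det$ gradient about $\Omega$, the linear term involves the Hessian of the barrier restricted to $S$, whose inverse is controlled by $\kappa_{\Gamma^*}$. A Brouwer fixed-point argument applied to the residual map then yields a bound of the form $\vertii{\widetilde\Omega-\Omega}_\infty \leq 2\kappa_{\Gamma^*}(\delta+\lambda_n)$, provided the nonlinear remainder is dominated; choosing $\lambda_n \asymp \sqrt{\log N/T}$ delivers the rate in (i). The remaining inclusion in (ii) --- that every edge with $|\Omega_{jk}| > c'\sqrt{\log N/T}$ survives --- follows because such entries exceed the elementwise error bound and hence cannot be thresholded to zero.

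The step I expect to be the main obstacle is the simultaneous control of the nonlinear remainder in the Taylor expansion of the $-\log\det$ gradient and the strict dual feasibility inequality: bounding the operator norm of the remainder uniformly in the elementwise norm requires passing through the maximum row sparsity, and it is precisely this passage that produces the $D_N^2$ factor and forces the scaling $T \geq c D_N^2 \log N$. Ensuring $\vertii{\widetilde Z_{S^c}}_\infty$ stays strictly below $1$ --- which rests on the incoherence condition (iii) absorbing both $\delta$ and the remainder --- is the companion delicate point. Since all of this is carried out in detail in \cite{ravikumar2011high} under hypotheses that Assumption \ref{p.assumption2} reproduces, invoking their theorem completes the proof.
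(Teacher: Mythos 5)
Your proposal takes essentially the same approach as the paper: the paper offers no proof of this lemma and simply imports it from \cite{ravikumar2011high}, exactly as you do, and your sketch of their primal--dual witness argument (covariance concentration, restricted oracle problem, strict dual feasibility, Brouwer fixed-point control of the elementwise error, and the source of the $D_N^2\log N$ sample-size requirement) is an accurate account of what that reference actually proves. The only point neither you nor the paper addresses is that $\widehat{\Sigma}$ here is built from OLS residuals rather than the errors themselves, so strictly one would also need to verify that the residual-versus-error discrepancy in $\vertii{\widehat{\Sigma}-\Sigma}_\infty$ is of order $\sqrt{\log N/T}$ before invoking their theorem.
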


%
	
	The first result of Lemma \ref{lemma.ravikumar} guarantees that, the error between each element of $\widehat{\Omega}_{gl}$ and $\Omega$ shrinks uniformly at a rate of $\sqrt{\log N/T}$, meaning that as long as $T$ increases faster than $\log N$, the error will go to zero. The maximum degree $D_N$ plays an important role as it determines the lower bound of sample size $T$. If $D_N$ is bounded or increases much slower than $N$ (for example, $D_N=\log N$), then it is possible that the properties in Lemma \ref{lemma.ravikumar} hold for $T<N$.
	
	 The second result of Lemma \ref{lemma.ravikumar} shows the exact recovery property, meaning that $\widehat{\Omega}_{gl}$ from (\ref{e.gl}) remains similar sparsity structure of $\Omega$. The non-edge set $E^c(\Omega) \subseteq E^c(\widehat{\Omega}_{gl})$, that is, if $\Omega_{ij}=0$, then $\widehat{\Omega}_{gl,ij}=0$ wp1. Therefore, consider matrix $\Delta_\Omega:=\Omega-\widehat{\Omega}_{gl}$, the maximum nonzero entries per row is at least $D_N$.
	
	\subsection{Asymptotic Properties of $\widehat{\beta}_{FGLasso}$}
	
	In this section, we discuss the consistency and asymptotic properties of $\widehat{\beta}_{FGLasso}$ defined in (\ref{e.fglasso}). As the main result, we show that $\widehat{\beta}_{FGLasso}$ and $\widehat{\beta}_{GLS}$ are asymptotically equivalent. For this, we assume the following assumption.
		
	\begin{assumption}[\bf Regularity Condition] ~\ \label{assumption 3} We assume the following
		\begin{itemize}
			\item[(i)] The precision matrix $\Omega=\Sigma^{-1}$ satisfies $s_{\min}(\Omega) > 0$ and $\vertii{\Omega}_\infty = \mathcal{O}(1)$.
			\item[(ii)] The regressors $X_i \in \mathbb{R}^{T\times K}$ $(i=1,2,\cdots,N)$ in model (\ref{m.stackoverT}) satisfy  $\frac{1}{T}X_i'X_j \xrightarrow{p} W_{ij} \in \mathbb{R}^{K \times K}$ and $\underset{1\leq i \leq N}{\mathrm{sup}}\vertii{\frac{1}{T}X_iX_i'-W_{ii}}_\infty=o_p(1)$ as $N,T \to \infty$.
			\item[(iii)] The singular value of $W_{ii}$ is lower bounded, i.e, $\underset{1\leq i \leq N}{\mathrm{min}}s_{min}(W_{ii})\geq c_0.$
			\item[(iv)] There exists $C_N$ such that $\underset{i,j=1,2,\hdots, N}{\max}\vertii{W_{ij}}_\infty \leq C_N$.
		\end{itemize}
	\end{assumption}
	
	The first assumption in Assumption \ref{assumption 3} regulates the true precision matrix $\Omega$. In this paper, we assume  its smallest eigenvalue does not shrink to zero and the largest elements does not explode as $N,T \to \infty$. Conditions (ii) and (iii) ensure all individual OLS estimators, $\widehat{\beta}_{i,OLS}, i=1,...,N$ are well defined.
		
	The uniform upper bound $C_N$ in condition (iv) controls the second moment of $X_t$. In this paper, we do not restrict $C_N$ to be bounded, and allow $C_N \to \infty$ slowly as $N \to \infty$. As we will show later, $C_N$ is one of the key parameters that determine the performance of graphical lasso estimator. 
	
	Some examples of the data generating processes of $\{X_{it}\}$ where $C_N=\mathcal{O}_p(1)$ are as follows:
	\begin{itemize}
		\item[(i)] If $X$ has bounded support, say $h\leq X_{it}\leq H$ for universal constants $h,H$, than $C_N=\mathcal{O}(1)$;
		\item[(ii)] If $\widetilde{X}_t=(X_{1t}',\cdots,X_{Nt}')\in \mathbb{R}^{NK}$ are random vectors with mean zero \footnote{The zero mean assumption is only for simplicity. To see this, model (\ref{m.system}) still holds for $Y_t-E(Y_t)$, $X_t-E(X_t)$.}  with covariance $W$ such that each $X_t/\sqrt{W}$ is iid with a sub-Gaussian distribution. Then $C_N=\mathcal{O}_p(1)$ if $T>cN$ for some constant $c$ and $\vertii{W}_{op}=\mathcal{O}(1)$ (see details in \cite{vershynin2018high} Theorem 4.6.1);
		\item[(iii)] If $\widetilde{X}_t \sim N(0, W)$ is zero-mean random vector generated from auto-regressive process with $s_{max}(\Theta)\leq \gamma<1$:
		\begin{equation}
		\widetilde{X}_t=\Theta \widetilde{X}_{t-1}+V_t.
		\end{equation} 
		Then $C_N=\mathcal{O}_p(1)$ if $T>c'N$ for some constant $c'>0$ and $\|W \|_{op}=\mathcal{O}(1)$. (details in Lemma 4 in \citep{negahban2011estimation}).
	\end{itemize}
	
	For the rest of the paper, we focus on the general situation thus do not put any constrain on $C_N$, i.e, we allow $C_N\to \infty$ as $N\to \infty$.

	\begin{proposition}[\textbf{Uniform Convergence Rate}] \label{prop.uniform.rate}
		Assume Assumptions \ref{assumption1}, \ref{p.assumption2} and \ref{assumption 3} hold. If $T \geq c D_N^2 \log N$ for some constant $c>0$, $\widehat{\beta}_{Fglasso}$ satisfies:
		\begin{equation}
		\vertii{\widehat{\beta}_{FGLasso}-\widehat{\beta}_{GLS}}_\infty 
		\leq \mathcal{O}_p \left(\frac{C_N^{\frac{3}{2}} D_N^{2}\sqrt{N}\log N}{\sqrt{T}} \right)
		\end{equation}
	\end{proposition}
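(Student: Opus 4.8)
The plan is to reduce the comparison of the two estimators to a perturbation analysis of the GLS weighting matrix and to control every ingredient through Lemma \ref{lemma.ravikumar}. Write
\begin{equation*}
A_T := \frac{1}{T}\sum_{t=1}^T X_t\Omega X_t', \quad \widehat{A}_T := \frac{1}{T}\sum_{t=1}^T X_t\widehat{\Omega}_{gl} X_t', \quad g_T := \frac{1}{T}\sum_{t=1}^T X_t\Omega U_t, \quad \widehat{g}_T := \frac{1}{T}\sum_{t=1}^T X_t\widehat{\Omega}_{gl} U_t,
\end{equation*}
so that, after substituting $Y_t = X_t'\beta + U_t$, one has $\widehat{\beta}_{GLS} = \beta + A_T^{-1}g_T$ and $\widehat{\beta}_{FGLasso} = \beta + \widehat{A}_T^{-1}\widehat{g}_T$. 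Subtracting and using $A_T^{-1}g_T = \widehat{\beta}_{GLS}-\beta$ gives the key identity
\begin{equation*}
\widehat{\beta}_{FGLasso} - \widehat{\beta}_{GLS} = \widehat{A}_T^{-1}\Big[(\widehat{g}_T - g_T) - (\widehat{A}_T - A_T)(\widehat{\beta}_{GLS}-\beta)\Big].
\end{equation*}
First I would record the block structure driving the two perturbations: writing $\Delta_\Omega = \Omega - \widehat{\Omega}_{gl}$, the $(i,j)$ block of $\widehat{A}_T - A_T$ equals $-\Delta_{\Omega,ij}\,\frac{1}{T}X_i'X_j$ and the $i$-th block of $\widehat{g}_T - g_T$ equals $-\sum_j \Delta_{\Omega,ij}\,\frac{1}{T}X_i'U_j$. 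The crucial structural input is Lemma \ref{lemma.ravikumar}(ii): since $E(\widehat{\Omega}_{gl})\subseteq E(\Omega)$, the matrix $\Delta_\Omega$ inherits the sparsity of $\Omega$ and has at most $D_N$ nonzero entries per row, while Lemma \ref{lemma.ravikumar}(i) gives $\vertii{\Delta_\Omega}_\infty = \mathcal{O}_p(\sqrt{\log N/T})$.

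Next I would bound the matrix perturbations in the norms adapted to an $\ell_\infty$ statement, using $\vertii{Mv}_\infty\le\vertiii{M}_\infty\vertii{v}_\infty$ throughout. For the Hessian perturbation, combining the per-row sparsity ($\le D_N$ nonzero terms), the moment bound $\vertii{\frac1T X_i'X_j}_\infty = \mathcal{O}_p(C_N)$ from Assumption \ref{assumption 3}(iv), and $\vertii{\Delta_\Omega}_\infty$ yields $\vertiii{\widehat{A}_T - A_T}_\infty = \mathcal{O}_p(C_N D_N\sqrt{\log N/T})$ (with $K=\mathcal{O}(1)$). The single place where the ambient dimension $N$ enters is the row-sum norm of the inverse weighting matrix: since $\vertiii{M}_\infty\le\sqrt{NK}\,\vertii{M}_{op}$ for any $NK\times NK$ matrix, the task becomes to control $\vertii{\widehat{A}_T^{-1}}_{op}$, equivalently $s_{\min}(\widehat{A}_T)$.

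The main obstacle is precisely establishing that $\widehat{A}_T$ stays invertible with a uniformly bounded $\vertii{\widehat{A}_T^{-1}}_{op}$ as $N,T\to\infty$. For the infeasible $A_T$ this follows from a quadratic-form argument: for any $v=(v_1',\dots,v_N')'$, $v'A_T v = \frac1T\sum_t (X_t'v)'\Omega(X_t'v)\ge s_{\min}(\Omega)\sum_i v_i'(\frac1T X_i'X_i)v_i \ge s_{\min}(\Omega)\,c_0\,\vertii{v}_2^2$ by Assumptions \ref{assumption 3}(i),(iii), so $s_{\min}(A_T)\ge s_{\min}(\Omega)c_0>0$ and $\vertiii{A_T^{-1}}_\infty = \mathcal{O}(\sqrt{N})$. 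Since $\widehat{A}_T - A_T$ is symmetric, $\vertii{\widehat{A}_T - A_T}_{op}\le\vertiii{\widehat{A}_T - A_T}_\infty = o_p(1)$ under $T\ge cD_N^2\log N$, whence $s_{\min}(\widehat{A}_T)\ge s_{\min}(A_T)-\vertii{\widehat{A}_T-A_T}_{op}$ is bounded below with probability tending to one, giving $\vertiii{\widehat{A}_T^{-1}}_\infty = \mathcal{O}_p(\sqrt{N})$. This is the delicate step, because it is what forces the dense inverse to contribute only a $\sqrt{N}$ factor rather than $N$, and it is where the exact-support-recovery half of Ravikumar's lemma is indispensable (without it the perturbation's row sums would scale with $N$, not $D_N$).

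Finally I would control the stochastic inputs. Conditional on $X$, each $\frac1T X_i'U_j$ is a mean-zero average of independent sub-Gaussian terms (Assumption \ref{p.assumption2}(ii)) with conditional variance of order $C_N/T$; a maximal inequality over the $\mathcal{O}(N^2)$ pairs gives $\max_{i,j}\vertii{\frac1T X_i'U_j}_\infty = \mathcal{O}_p(\sqrt{C_N\log N/T})$. Together with $\vertiii{\Omega}_\infty\le D_N\vertii{\Omega}_\infty=\mathcal{O}(D_N)$ this yields $\vertii{g_T}_\infty=\mathcal{O}_p(D_N\sqrt{C_N\log N/T})$, hence $\vertii{\widehat{\beta}_{GLS}-\beta}_\infty\le\vertiii{A_T^{-1}}_\infty\vertii{g_T}_\infty$, and by the same cross-term bound combined with the sparsity of $\Delta_\Omega$ it also controls $\vertii{\widehat{g}_T - g_T}_\infty$. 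Substituting these factors into the key identity, with the dominant contribution coming from the $\widehat{A}_T^{-1}(\widehat{A}_T - A_T)(\widehat{\beta}_{GLS}-\beta)$ term, and collecting the resulting powers of $C_N$, $D_N$, $\sqrt{N}$, $\log N$ and $T$ gives the stated rate. The only remaining care is to make the moment bound $\vertii{\frac1T X_i'X_j}_\infty = \mathcal{O}_p(C_N)$ hold uniformly in $(i,j)$ starting from the pointwise convergence in Assumption \ref{assumption 3}(ii), which I would handle by a union bound of the same sub-Gaussian type.
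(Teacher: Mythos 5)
Your architecture mirrors the paper's: the same resolvent decomposition of $\widehat{A}_T^{-1}\widehat{g}_T-A_T^{-1}g_T$, the same use of Lemma \ref{lemma.ravikumar}(i)--(ii) to get $\vertiii{\Delta_\Omega}_\infty=\mathcal{O}_p(D_N\sqrt{\log N/T})$, the same sub-Gaussian maximal inequality over the $\mathcal{O}(N^2)$ cross-terms $\frac{1}{T}X_i'U_j$, and the same $\vertiii{M}_\infty\le\sqrt{NK}\,\vertii{M}_{op}$ conversion for the inverse weighting matrix (your quadratic-form lower bound for $s_{\min}(A_T)$ is a clean substitute for the paper's singular-value inequalities in its Lemma A.2). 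The gap is in the final assembly. By first bounding $\vertii{\widehat{\beta}_{GLS}-\beta}_\infty\le\vertiii{A_T^{-1}}_\infty\vertii{g_T}_\infty$ and then multiplying by $\vertiii{\widehat{A}_T^{-1}}_\infty$ and $\vertiii{\widehat{A}_T-A_T}_\infty$, you pay the dimension factor $\sqrt{NK}$ twice --- once inside $\vertiii{\widehat{A}_T^{-1}}_\infty$ and once inside $\vertiii{A_T^{-1}}_\infty$ --- so your claim that $N$ enters ``in a single place'' is not right. Your dominant term evaluates to
\begin{equation*}
\sqrt{N}\cdot C_ND_N\sqrt{\tfrac{\log N}{T}}\cdot\sqrt{N}\cdot D_N\sqrt{\tfrac{C_N\log N}{T}}
=\mathcal{O}_p\left(\frac{C_N^{3/2}D_N^{2}\,N\log N}{T}\right),
\end{equation*}
which exceeds the stated $\mathcal{O}_p\left(C_N^{3/2}D_N^{2}\sqrt{N}\log N/\sqrt{T}\right)$ by a factor $\sqrt{N/T}$. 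Hence your argument delivers the proposition only when $N\le T$ and fails in the $N>T$ regime, which the proposition (whose only sample-size requirement is $T\ge cD_N^2\log N$) is explicitly designed to cover and which is the paper's main selling point.

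The repair is exactly the paper's grouping: write the cross term as $\widehat{A}_T^{-1}(\widehat{A}_T-A_T)A_T^{-1}g_T$ and bound the triple matrix product as a single unit, $\vertiii{\widehat{A}_T^{-1}(\widehat{A}_T-A_T)A_T^{-1}}_\infty\le\sqrt{NK}\,\vertii{\widehat{A}_T^{-1}}_{op}\vertiii{\widehat{A}_T-A_T}_\infty\vertii{A_T^{-1}}_{op}$, so that $\sqrt{NK}$ is paid once and the two inverses contribute only constants through their operator norms; multiplying by $\vertii{g_T}_\infty=\mathcal{O}_p(D_N\sqrt{C_N\log N/T})$ then recovers (indeed slightly improves on) the stated rate. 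One further small point, which your write-up shares with the paper: for $s_{\min}(\widehat{A}_T)$ to stay bounded away from zero you need $C_ND_N\sqrt{\log N/T}\to 0$, and $T\ge cD_N^2\log N$ alone only controls this up to the factor $C_N$; if $C_N\to\infty$ this should be stated as an additional requirement.
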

	
	The sparsity parameter $D_N$, nonzero entries per row in $\Omega$, determines the minimum requirement of sample size $T$ and the convergence rate. As shown in the proof, the accuracy of $\widehat{\beta}_{Fglasso}$ in terms of $\widehat{\beta}_{GLS}$ depends heavily on the row norm of $\widehat{\Omega}_{gl}-\Omega$, which is bounded by $\mathcal{O}_p \left( D_N \sqrt{\frac{\log N}{T}} \right)$. 
	
	If $C_N$ and $D_N$ are bounded, then $\vertii{\widehat{\beta}_{FGLasso}-\widehat{\beta}_{GLS}}_\infty \leq \mathcal{O}_p \left( \frac{\sqrt{N}\log N}{\sqrt{T}} \right)$. As long as $\sqrt{T}$ grows faster than $\sqrt{N}\log N$,  asymptotically, $\widehat{\beta}_{FGLasso}$ is going to perform similarly to $\widehat{\beta}_{GLS}$.  
	
	Instead of $D_N=\mathcal{O}_p(1)$, if we assume $D_N$ grows as $N$ but in a much slower rate, say $D_N=\mathcal{O}_p(\log N)$ as suggested by \cite{barabasi2016network} in network literature, then $\| \widehat{\beta}_{FGLasso}-\widehat{\beta}_{GLS} \|_\infty
	= \mathcal{O}_p \left(\frac{\sqrt{N}(\log N)^{3}}{\sqrt{T}}\right)$.
	
	Next we will discuss the asymptotic property between $\widehat{\beta}_{Fglasso}$ and true $\beta$. Before we present the main result, Theorem \ref{thm.asy.fglasso}, we first show an asymptotic property of $\widehat{\beta}_{GLS}$.
	
%
	
	\begin{proposition} [\textbf{Asymptotic Property of GLS Estimator}]  Assume Assumptions \ref{assumption1} and \ref{assumption 3} hold. Then, for any $b \in \mathbb{R}^{KN\times 1}$ such that $b'b=1$, $\widehat{\beta}_{GLS}$ satisfies:
		\begin{equation}
		b'\sqrt{T}(\widehat{\beta}_{GLS}-\beta) \Rightarrow N\Big(0,b'\left[ E ( X_t'\Omega X_t ) \right]^{-1} b\Big)
		\end{equation} 
		\label{prop.asy.gls}
	\end{proposition}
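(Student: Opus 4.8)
The plan is to begin from the exact error representation. Substituting $Y_t = X_t'\beta + U_t$ into the GLS formula (\ref{e.gls}) cancels the $\beta$ term and yields
\[
\sqrt{T}\,(\widehat{\beta}_{GLS}-\beta) = Q_T^{-1}S_T, \qquad Q_T := \tfrac{1}{T}\sum_{t=1}^{T}X_t\Omega X_t', \quad S_T := \tfrac{1}{\sqrt{T}}\sum_{t=1}^{T}X_t\Omega U_t,
\]
so the scalar of interest is $b'\sqrt{T}(\widehat{\beta}_{GLS}-\beta) = b'Q_T^{-1}S_T$. I would work conditionally on the regressors $X$ and use strict exogeneity (Assumption \ref{assumption1}(ii)): given $X$, the summands $\eta_t := b'Q_T^{-1}X_t\Omega U_t$ are independent across $t$ with mean zero. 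The key algebraic simplification is $\Omega\Sigma\Omega = \Omega$ (since $\Sigma=\Omega^{-1}$), which collapses the conditional variance: $\tfrac1T\sum_t \mathrm{Var}(\eta_t\mid X) = b'Q_T^{-1}\big(\tfrac1T\sum_t X_t\Omega\Sigma\Omega X_t'\big)Q_T^{-1}b = b'Q_T^{-1}Q_TQ_T^{-1}b = b'Q_T^{-1}b$. Hence, conditional on $X$, the target has variance exactly $b'Q_T^{-1}b$, and the proof reduces to (a) a triangular-array CLT for $\{\eta_t\}$ and (b) the convergence $b'Q_T^{-1}b \xrightarrow{p} b'Q^{-1}b$, where $Q := E(X_t\Omega X_t') = \plim Q_T$.

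For part (a) I would verify a Lyapunov condition for $\tfrac1{\sqrt T}\sum_t \eta_t$ conditionally on $X$. Sub-Gaussianity of the errors (Assumption \ref{p.assumption2}(ii)) bounds the fourth conditional moment of each $\eta_t$ by a constant multiple of its squared variance $\sigma_t^2 := b'Q_T^{-1}X_t\Omega X_t'Q_T^{-1}b$, so the Lyapunov ratio is controlled by $\max_t \sigma_t^2/(T\,b'Q_T^{-1}b)$. The denominator is bounded below because $s_{\min}(Q_T)$ is bounded away from zero with probability tending to one: for any unit vector $v$, $v'Q_Tv \ge s_{\min}(\Omega)\,\tfrac1T\sum_t\vertii{X_t'v}^2$, and the diagonal-block lower bound $s_{\min}(W_{ii})\ge c_0$ (Assumption \ref{assumption 3}(iii)) together with the uniform convergence of the diagonal blocks (Assumption \ref{assumption 3}(ii)) and $K=\mathcal{O}(1)$ give $\inf_{\vertii{v}=1} v'Q_Tv \ge c_0\,s_{\min}(\Omega) - o_p(1)$, while $s_{\min}(\Omega)\ge 1/\kappa_\Sigma>0$ by Assumption \ref{p.assumption2}(iv) and the footnoted identity. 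Since $\vertii{Q_T^{-1}b}=\mathcal{O}_p(1)$ and $s_{\max}(\Omega)\le\vertiii{\Omega}_\infty=\mathcal{O}(D_N)$, each $\sigma_t^2$ is of order $D_N\max_i\vertii{X_{it}}^2$, which is negligible relative to $T$ under the maintained scaling; thus the Lyapunov ratio vanishes and $b'Q_T^{-1}S_T/\sqrt{b'Q_T^{-1}b}\Rightarrow N(0,1)$ conditionally on $X$.

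Part (b) is where the high-dimensional difficulty concentrates. Writing $b'(Q_T^{-1}-Q^{-1})b = -\,b'Q_T^{-1}(Q_T-Q)Q^{-1}b$ and bounding it by $\vertii{Q_T^{-1}b}\,\vertii{Q_T-Q}_{op}\,\vertii{Q^{-1}b}$, everything reduces to showing $\vertii{Q_T-Q}_{op}\xrightarrow{p}0$ for matrices of growing dimension $NK$. I would exploit the block structure together with the sparsity of $\Omega$: the $(i,j)$ block of $Q_T-Q$ equals $\Omega_{ij}\big(\tfrac1T X_i'X_j - W_{ij}\big)$ and is nonzero only on the at most $D_N$ indices $j$ with $\Omega_{ij}\ne 0$ (Assumption \ref{p.assumption2}(i)). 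Since $Q_T-Q$ is symmetric, $\vertii{Q_T-Q}_{op}\le \vertiii{Q_T-Q}_\infty$, the maximum absolute block-row sum; each row then contributes at most $D_N$ surviving blocks, each with coefficient bounded by $\vertii{\Omega}_\infty=\mathcal{O}(1)$ (Assumption \ref{assumption 3}(i)) times a block discrepancy that is $o_p(1)$ uniformly for the diagonal and must be handled uniformly over the off-diagonal blocks as well. This yields $\vertii{Q_T-Q}_{op}=o_p(1)$ and hence $b'Q_T^{-1}b\xrightarrow{p}b'Q^{-1}b$.

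Finally I would assemble the two parts with Slutsky's theorem: the conditional CLT gives asymptotic standard normality after dividing by $\sqrt{b'Q_T^{-1}b}$, and since $b'Q_T^{-1}b$ converges in probability to the constant $b'Q^{-1}b$, multiplying back gives $b'\sqrt{T}(\widehat{\beta}_{GLS}-\beta)\Rightarrow N\big(0,b'[E(X_t\Omega X_t')]^{-1}b\big)$; because the limiting law does not depend on $X$, the conditional statement passes to the unconditional one. I expect the main obstacle to be the operator-norm control in part (b): the pointwise convergence of the off-diagonal blocks supplied by Assumption \ref{assumption 3}(ii) must be upgraded to a bound that is uniform over the growing set of at most $D_N$ nonzero blocks per row, and it is precisely the sparsity $D_N$ together with $\vertii{\Omega}_\infty=\mathcal{O}(1)$ that makes this possible.
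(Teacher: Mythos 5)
Your proposal is correct and follows essentially the same route as the paper's proof: the identical decomposition $b'\sqrt{T}(\widehat{\beta}_{GLS}-\beta)=b'A_{NT}^{-1}\bigl(\tfrac{1}{\sqrt{T}}\sum_{t}X_t\Omega U_t\bigr)$ into conditionally mean-zero summands whose (average) variance collapses to $b'A_{NT}^{-1}b$ via $\Omega\Sigma\Omega=\Omega$, followed by a central limit theorem. The paper compresses everything after that into \dq{Apply CLT,} whereas you additionally supply the Lyapunov verification and the convergence $b'A_{NT}^{-1}b\xrightarrow{p}b'\left[E(X_t'\Omega X_t)\right]^{-1}b$; note only that these added steps lean on Assumption \ref{p.assumption2} (sub-Gaussianity and the sparsity bound $D_N$), which the proposition does not formally list among its hypotheses.
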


	In particular, let $b=e_i \in \mathbb{R}^{KN}$, the column vector that only $i'th$ element is $1$ and $0$ otherwise. Then for each element of $\widehat{\beta}_{GLS,i}$ $(i=1,2,\cdots,KN)$, we have:
		\begin{equation}
		\sqrt{T}(\widehat{\beta}_{GLS,i}-\beta_i) \Rightarrow N \left(0, \left[ E (X_t'\Omega X_t )^{-1} \right]_{ii} \right).
		\end{equation}
	
	Combining the results from Propositions \ref{prop.uniform.rate} and \ref{prop.asy.gls}, we deduce that if $T$ grows fast enough compared with $(D_N,N)$, then $\widehat{\beta}_{Fglasso}$ is asymptotically equivalent with $\widehat{\beta}_{GLS}$, therefore the distribution of $\widehat{\beta}_{Fglasso}$ tends to a normal distribution asymptotically. 
	Summerizing this, we provide the following theorem as the main theoretical result of the paper.
	
	\begin{theorem}[\textbf{Asymptotic Property of $\widehat{\beta}_{Fglasso}$}] \label{thm.asy.fglasso}
		Assume Assumptions \ref{assumption1}, \ref{p.assumption2} and \ref{assumption 3} hold. If $T$ grows faster at $N$ such that $\frac{C_N^{3/2} D_N^2 \sqrt{N} \log N}{\sqrt{T}}\to 0$, for $T > c_0D_N^2\log N$ ($c_0$ constant) and any vector $b \in \mathbb{R}^{KN\times 1}$ such that $b'b=1$, the feasible graphical lasso estimator $\widehat{\beta}_{Fglasso}$ satisfies:
		\begin{equation}
		b'\sqrt{T}(\widehat{\beta}_{FGLasso}-\beta) \Rightarrow N\left(0,b'\left[ E\big( X_t'\Omega X_t\big)\right] ^{-1}  b \right).
		\end{equation}
	\end{theorem}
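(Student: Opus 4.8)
The plan is to reduce the statement to the two propositions already in hand by an asymptotically-negligible-remainder argument followed by Slutsky's theorem. First I would write, for every $b$ with $b'b=1$, the exact telescoping identity
\[
b'\sqrt{T}(\widehat{\beta}_{FGLasso}-\beta) = b'\sqrt{T}(\widehat{\beta}_{GLS}-\beta) + b'\sqrt{T}(\widehat{\beta}_{FGLasso}-\widehat{\beta}_{GLS}).
\]
The first term on the right is handled verbatim by Proposition \ref{prop.asy.gls}, which already delivers $b'\sqrt{T}(\widehat{\beta}_{GLS}-\beta) \Rightarrow N(0, b'[E(X_t'\Omega X_t)]^{-1}b)$. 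Hence the entire theorem collapses to showing that the second term, the $\sqrt{T}$-scaled gap between the feasible and infeasible estimators, is $o_p(1)$.

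To control that gap I would invoke Proposition \ref{prop.uniform.rate}. Bounding the linear functional by the element-wise maximum norm gives $|b'\sqrt{T}(\widehat{\beta}_{FGLasso}-\widehat{\beta}_{GLS})| \le \|b\|_1\,\sqrt{T}\,\|\widehat{\beta}_{FGLasso}-\widehat{\beta}_{GLS}\|_\infty$, and Proposition \ref{prop.uniform.rate} supplies $\|\widehat{\beta}_{FGLasso}-\widehat{\beta}_{GLS}\|_\infty = \mathcal{O}_p\big(C_N^{3/2}D_N^2\sqrt{N}\log N/\sqrt{T}\big)$. The rate hypothesis $C_N^{3/2}D_N^2\sqrt{N}\log N/\sqrt{T}\to 0$ is exactly calibrated so that, for the coordinate directions $b=e_i$ of the accompanying corollary, the remainder vanishes once combined with the uniform rate. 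With the remainder negligible, Slutsky's theorem combines it with the Gaussian limit of the first term and yields the claimed limit law with the GLS asymptotic variance.

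The step I expect to be the real obstacle is making the negligibility of the second term uniform across the growing dimension, i.e.\ passing from the coordinatewise $\ell_\infty$ control of Proposition \ref{prop.uniform.rate} to control of an arbitrary unit-norm linear functional $b'(\widehat{\beta}_{FGLasso}-\widehat{\beta}_{GLS})$. A crude conversion replaces $\|b\|_1$ by $\sqrt{KN}$, which would overwhelm the rate, so the argument must either restrict attention to fixed sparse directions (the $b=e_i$ case that is the object of practical inference) or exploit additional structure of the gap -- for instance that $\widehat{\Omega}_{gl}-\Omega$ inherits the sparsity pattern of $\Omega$ through Lemma \ref{lemma.ravikumar}(ii), so that the effective number of terms entering the relevant bilinear form is of order $D_N$ rather than $N$. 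Getting the bookkeeping of $C_N$, $D_N$, $N$ and $T$ to close under the single stated rate condition is where the care lies; the decomposition and the concluding Slutsky step are otherwise routine.
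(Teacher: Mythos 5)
Your decomposition-plus-Slutsky argument is exactly the route the paper takes: the paper supplies no standalone proof of Theorem \ref{thm.asy.fglasso} at all, only the remark that it follows by ``combining'' Propositions \ref{prop.uniform.rate} and \ref{prop.asy.gls}, so your write-up is if anything more explicit than the original. Two points deserve attention, however.

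First, a bookkeeping issue that as written leaves your argument short even for $b=e_i$. You bound $|b'\sqrt{T}(\widehat{\beta}_{FGLasso}-\widehat{\beta}_{GLS})|\le \|b\|_1\,\sqrt{T}\,\|\widehat{\beta}_{FGLasso}-\widehat{\beta}_{GLS}\|_\infty$ and then quote the displayed statement of Proposition \ref{prop.uniform.rate}, which gives $\|\widehat{\beta}_{FGLasso}-\widehat{\beta}_{GLS}\|_\infty=\mathcal{O}_p\bigl(C_N^{3/2}D_N^2\sqrt{N}\log N/\sqrt{T}\bigr)$. Multiplying that by the extra $\sqrt{T}$ yields $\mathcal{O}_p\bigl(C_N^{3/2}D_N^2\sqrt{N}\log N\bigr)$, which does \emph{not} vanish under the stated rate condition. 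What rescues the argument is that the appendix proof of Proposition \ref{prop.uniform.rate} actually establishes the stronger bound $\sqrt{T}\,\|\widehat{\beta}_{FGLasso}-\widehat{\beta}_{GLS}\|_\infty\le \mathcal{O}_p\bigl(C_N^{3/2}D_N^2\sqrt{N}\log N/\sqrt{T}\bigr)$ (the $\sqrt{T}$ appears on the left-hand side throughout the proof and seems to have been dropped from the proposition's display). You must cite that $\sqrt{T}$-scaled version; with it, the remainder is $o_p(1)$ coordinatewise and Slutsky closes the coordinate case.

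Second, the obstacle you flag for arbitrary unit-$\ell_2$-norm $b$ is genuine and is not resolved by the paper either: the crude bound $\|b\|_1\le\sqrt{KN}$ reintroduces a factor of $\sqrt{N}$ that the rate condition does not absorb, and nothing in the paper's argument exploits sparsity of $b$ or of the gap vector to avoid it. The theorem as stated for all $b$ with $b'b=1$ therefore rests on an argument that only covers directions with bounded $\ell_1$ norm (in particular $b=e_i$, which is the case the authors actually use). You were right to identify this as the place where the care lies; it is a gap in the source, not in your reading of it.
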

	Similar to Proposition \ref{prop.asy.gls}, let $b=e_i \in \mathbb{R}^{KN}$, Theorem \ref{thm.asy.fglasso} implies that each element of $\widehat{\beta}_{FGLasso,i}$ $(i=1,2,\cdots,KN)$ satisfies:
		\begin{equation}
		\sqrt{T}(\widehat{\beta}_{Fglasso,i}-\beta_i) \Rightarrow N \left( 0,  \left[ E\big( X_t'\Omega X_t\big)^{-1} \right]_{ii} \right).
		\end{equation}

\section{Monte Carlo Simulations}	

	In this section, we discuss finite sample properties of the FGLasso estimator using Monte Carlo simulation experiments. For different pairs of $(N,T)$, we generate data and calculate FGLasso as well as OLS, GLS and FGLS estimators. Also, we compare their distances to the true coefficients $\beta$ based on the element-wise maximum norm and root mean square error (RMSE). 
	
	\subsection{MC Design}
	\begin{figure} 
		\centering
		\includegraphics[scale=0.7]{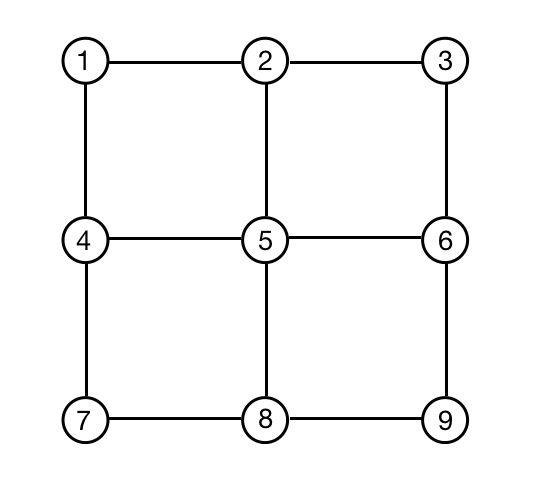}
		\caption{Four-Nearest Neighbor Lattices (N=3)} 
		\floatfoot{\textit{Notes: In figure \ref{fig:fig1}, there are $9$ nodes, each nodes have different degrees. For example, node $1$ is linked with node $4$ and $2$, thus node $1$'s degree is 2 and in the first row of $\Omega$, only $\Omega_{11}, \Omega_{12}, \Omega_{14}\neq 0$. The main purpose we consider generate $\Omega$ in this structure is the highest possible degree $d$ is bounded by $4$. To generate $\Omega$, if node $i$ is not linked with node $j$ ($i \neq j$), set $\Omega_{ij}=0$, otherwise, let $\Omega_{ij}=0.25$ and $\Omega_{ii}=1$.}}
		\label{fig:fig1}
	\end{figure}
	
	The DGP is  
		\[
			Y_t=X_t'\beta+U_t \quad t=1,2,....,T,
		\]
	where $K=1$, $X_{it} \sim N(0,1)$ for all $t$ and $i$, $\beta_{L\times 1} \sim U[-1,1]$. Further, let $U_{t} \sim N(0, \Omega^{-1})$ where $\Omega \in \mathbb{R}^{N\times N}$ is generated from the following four different designs:
	
	\begin{itemize}
	\item[(i)] {\bf Band Graph:} Let $\Omega_{i,i}=1$, $\Omega_{i,i+1}=\Omega_{i+1,i}=0.6$, $\Omega_{i,i+2}=\Omega_{i+2,i}=0.3$, and $\Omega_{i,j}=0$ for $| i-j | \geq 3$;
	\item[(ii)] {\bf Four-Nearest Neighbor Lattices Graph:} This design comes from \cite{ravikumar2011high} (see Figure \ref{fig:fig1}). Let $\Omega_{ii}=1$, $\Omega_{ij}=0.25$ if $(i,j) \in E(\Omega)$ and $0$ otherwise. For example, if $N=9$ ($3\times 3$ graph), 
	\[{\Large \Omega} \,= \;
	\begin{blockarray}{cccccccccc}
	1 & 2 & 3 & 4 & 5 & 6 & 7 & 8 & 9 & \text{Nodes} \\
	\begin{block}{(ccc|ccc|ccc)c}
	1 & 0.25 & 0   &   0.25 & 0 & 0   &0  &0  &0 & 1  \\
	0.25 & 1 & 0.25   &0  & 0.25 & 0   & 0 & 0 & 0 & 2\\
	0 & 0.25 & 1 & 0 & 0 & 0.25 & 0 & 0 & 0 & 3\\ 
	\BAhhline{---------}
	0.25 & 0 & 0 &1 & 0.25 & 0 & 0.25 & 0 & 0 & 4\\
	0 & 0.25 & 0 & 0.25 & 1 & 0.25 & 0 & 0.25 & 0 & 5\\
	0 & 0 & 0.25 & 0 & 0.25 & 1 & 0 & 0 & 0.25 & 6 \\ 
	\BAhhline{---------}
	0 & 0 & 0 & 0.25 & 0 & 0 & 1 & 0.25 & 0 & 7\\
	0 & 0 & 0 & 0 & 0.25 & 0 & 0.25 & 1 & 0.25 & 8\\
	0 & 0 & 0 & 0 & 0 & 0.25 & 0 & 0.25 & 1 & 9 \\
	\end{block}
	\end{blockarray}
	\]
	\item[(iii)] {\bf AR(1):} Let $\Omega_{ij}=0.6^{|i-j|}$;
	\item[(iv)] {\bf Dense:}  Let the covariance matrix $\Sigma=\Omega^{-1}$ be the band matrix where $\Sigma_{ii}=1$, $\Sigma_{i,i+1}=\Sigma_{i+1,i}=0.2$ and $\Sigma_{ij}=0$ for all $|i-j|\geq 2$.
	\end{itemize}
	
	The first two designs generate a sparse precision matrix $\Omega$ with a certain pattern. Specifically, the number of nonzero entries per row is always $3$ in band structure, and is at most $4$ in four-nearest neighbor lattice. 
	
	The precision matrix $\Omega$ from the AR(1) design can be seen as a special case of band graph as design 1, the values of the entries exponentially decay as they move away from the diagonal. When $N$ is small, $\Omega$ is relatively dense but becomes a sparse matrix when $N$ is large. 
	
	The precision matrix $\Omega$ from the last design is dense. Instead of precision matrix to be sparse, we consider the case that covariance matrix $\Sigma$ is sparse and has a banded structure. 
	
	We estimate $\widehat{\Omega}_{gl}$ by solving (\ref{e.gl}) using algorithm proposed by \cite{friedman2008sparse}. We choose $\lambda_n$ by the 5-fold cross validation.\footnote{More precisely, in each replication, we divide the $T$ samples into 5 folds and use four of them as the training data set and one as the validation set. With each choice of $\lambda_n$, we estimate the $\widehat{\beta}_{Fglasso}$ estimators using the training data, then plug them into the validation set and calculate the mean squared error. We choose $\lambda_n$ that minimizes the averaged MSE.} 

	\subsection{MC Results}
	For each experiment, we fix $T=200$ and let $N \in \{ 50,100,200,300,400 \}$\footnote{For four-nearest neighbor lattices design, $N$ can only be square number, so we choose $N=\{49,100,196,289,400\}$. }. For each pair of $\{N,T\}$, we compare $\| \widehat{\beta}_{OLS}-\beta\|$, $\| \widehat{\beta}_{GLS} - \beta\|$, $\|\widehat{\beta}_{FGLS}-\beta\|$ and $\| \widehat{\beta}_{FGLasso}-\beta \|$ in terms of the element-wise maximum $(l_\infty)$ norm and the root mean square error ($RMSE$)\footnote{Here, the $l_\infty$ norm is the element-wise maximum norm $\|\cdot \|_\infty$, and the RMSE is defined as $\| \cdot \|_F / \sqrt{KN}$.}. Table \ref{Tab1} reports the results$(\times 100)$ on the average of 100 replications, as well as the number of times out of 100 that $\widehat{\beta}_{FGLasso}$ outperforms $\widehat{\beta}_{FGLS}$.
	
	First, the results in Table \ref{Tab1} confirms that GLS, as the infeasible efficient estimator, performs better than other estimators and has a lower standard deviation. It also confirms the well-known results that when $N\leq T$, FGLS estimator exists and performs closer to GLS when $N$ is small relatively but gets worse as $N$ rises. In the dense design, when the efficiency gain of GLS is little, FGLS behaves worse than OLS. 
	
	For the first two designs where $\Omega$ is exactly sparse, the FGLasso estimator outperforms the FGLS estimator even when $N$ is relatively small. Moreover, it maintains the good performance when $N$ increases even when $N>T$. For example, in the band structure where  $N=300$ and $T=200$, the $l_\infty$ norm of the $(\widehat{\beta}_{FGLasso}-\beta)$ is $0.2656$ with standard deviation $0.035$, very close to the GLS estimator, which are $0.2242$ and $0.0314$, respectively. 
	
	When $N$ large, the FGLasso estimator in the third AR(1) design behaves similarly to the FGLasso in the band design. It is because the entries per row decay exponentially as it moves away from the diagonal, so $\Omega$ has a sparseness structure when $N$ is large. When $N$ is relatively small, both the FGLS and FGLasso estimators perform well and are close to the GLS estimator, but there is no significant evidence of the advantage of the FGLasso over the FGLS in finite samples. For example when $N=50$ and $T=200$, among 100 simulations, the number of times the FGLasso estimator beats the FGLS estimator is 44 times in $l_\infty$ norm, and 51 times in the RMSE.
	
	In the fourth design, when the covariance matrix has a band structure, the efficiency gain of the GLS estimator is limited. Though the FGLasso estimator performs better than the FGLS estimator, their performances are not significantly better compared to the OLS estimator.
	
	In general, when $\Omega$ is sparse, the FGLasso estimator outperforms the FGLS estimator and behaves closely to the infeasible GLS estimator even when $N>T$. When $\Omega$ is not sparse, it requires larger $T$ (or less $N$) for the FGLS estimator to perform well, as the theory predicted.  
	
\begin{table}[htbp]
\centering
\begin{threeparttable}
\scalebox{0.9}{
\begin{tabular}{cccccclccccc}
\hline\hline
T=200 & \multicolumn{5}{c}{$l_\infty \times 100$} &  & \multicolumn{5}{c}{$RMSE \times 100$} \\ \cline{2-6} \cline{8-12} 
N & 50 & 100 & 200 & 300 & 400 &  & 50 & 100 & 200 & 300 & 400 \\ \hline
\multicolumn{1}{l}{} & \multicolumn{1}{l}{} & \multicolumn{1}{l}{} & \multicolumn{1}{l}{} & \multicolumn{1}{l}{} & \multicolumn{1}{l}{} &  & \multicolumn{1}{l}{} & \multicolumn{1}{l}{} & \multicolumn{1}{l}{} & \multicolumn{1}{l}{} & \multicolumn{1}{l}{} \\
 & \multicolumn{11}{c}{Band} \\
OLS & 30.77 & 34.12 & 37.43 & 38.53 & 39.94 &  & 12.13 & 12.19 & 12.22 & 12.31 & 12.39 \\
\multicolumn{1}{l}{} & \multicolumn{1}{l}{\textit{(5.11)}} & \multicolumn{1}{l}{\textit{(4.74)}} & \multicolumn{1}{l}{\textit{(4.59)}} & \multicolumn{1}{l}{\textit{(4.7)}} & \multicolumn{1}{l}{\textit{(4.57)}} &  & \textit{(1.17)} & \textit{(0.84)} & \textit{(0.53)} & (0.49) & (0.46) \\
GLS & 17.85 & 19.57 & 20.69 & 22.42 & 22.78 &  & 7.05 & 7.11 & 7.05 & 7.15 & 7.16 \\
\multicolumn{1}{l}{} & \multicolumn{1}{l}{\textit{(3.37)}} & \multicolumn{1}{l}{\textit{(2.91)}} & \multicolumn{1}{l}{\textit{(2.39)}} & \multicolumn{1}{l}{\textit{(3.14)}} & \multicolumn{1}{l}{\textit{(2.85)}} &  & \textit{(0.81)} & \textit{(0.54)} & \textit{(0.37)} & \textit{(0.29)} & \textit{(0.25)} \\
FGLS & 20.7 & 26.10 & 37.34 &  &  &  & 8.11 & 9.34 & 12.17 &  &  \\
\multicolumn{1}{l}{} & \multicolumn{1}{l}{\textit{(3.3)}} & \multicolumn{1}{l}{\textit{(3.58)}} & \multicolumn{1}{l}{\textit{(4.55)}} & \multicolumn{1}{l}{\textit{}} & \multicolumn{1}{l}{\textit{}} &  & \textit{(0.84)} & \textit{(0.70)} & \textit{(0.53)} & \textit{} & \textit{} \\
FGLasso & 19.37 & 21.82 & 24 & 26.56 & 27.63 &  & 7.59 & 7.91 & 8.11 & 8.45 & 8.58 \\
\multicolumn{1}{l}{} & \multicolumn{1}{l}{\textit{(3.19)}} & \multicolumn{1}{l}{\textit{(3.23)}} & \multicolumn{1}{l}{\textit{(2.77)}} & \multicolumn{1}{l}{\textit{(3.5)}} & \multicolumn{1}{l}{\textit{(2.97)}} &  & \textit{(0.85)} & \textit{(0.63)} & \textit{(0.39)} & \textit{(0.33)} & \textit{(0.34)} \\
Percentage & 71 & 95 & 100 &  &  &  & 96 & 100 & 100 &  &  \\ \hline
\multicolumn{1}{l}{} & \multicolumn{1}{l}{} & \multicolumn{1}{l}{} & \multicolumn{1}{l}{} & \multicolumn{1}{l}{} & \multicolumn{1}{l}{} &  & \multicolumn{1}{l}{} & \multicolumn{1}{l}{} & \multicolumn{1}{l}{} & \multicolumn{1}{l}{} & \multicolumn{1}{l}{} \\
 & \multicolumn{11}{c}{Four-Nearest Neighbor Lattice} \\
OLS & 22.35 & 26.17 & 29.62 & 31.05 & 34.50 &  & 8.84 & 9.36 & 9.66 & 9.92 & 10.35 \\
\multicolumn{1}{l}{} & \multicolumn{1}{l}{\textit{(3.95)}} & \multicolumn{1}{l}{\textit{(3.92)}} & \multicolumn{1}{l}{\textit{(3.84)}} & \multicolumn{1}{l}{\textit{(3.97)}} & \multicolumn{1}{l}{\textit{(4.01)}} &  & \textit{(1.07)} & \textit{(0.68)} & \textit{(0.51)} & \textit{(0.44)} & \textit{(0.50)} \\
GLS & 17.54 & 19.59 & 20.62 & 21.74 & 22.67 &  & 7.01 & 7.11 & 7.07 & 7.06 & 7.15 \\
\multicolumn{1}{l}{} & \multicolumn{1}{l}{\textit{(3.17)}} & \multicolumn{1}{l}{\textit{(2.76)}} & \multicolumn{1}{l}{\textit{(2.53)}} & \multicolumn{1}{l}{\textit{(2.56)}} & \multicolumn{1}{l}{\textit{(2.63)}} &  & \textit{(0.68)} & \textit{(0.51)} & \textit{(0.43)} & \textit{(0.28)} & \textit{(0.29)} \\
FGLS & 19.31 & 23.39 & 29.53 &  &  &  & 7.73 & 8.48 & 9.60 &  &  \\
\multicolumn{1}{l}{} & \multicolumn{1}{l}{\textit{(2.90)}} & \multicolumn{1}{l}{\textit{(3.54)}} & \multicolumn{1}{l}{\textit{(4.01)}} & \multicolumn{1}{l}{\textit{}} & \multicolumn{1}{l}{\textit{}} &  & \textit{(0.71)} & \textit{(0.65)} & \textit{(0.50)} & \textit{} & \textit{} \\
FGLasso & 18.12 & 20.31 & 21.36 & 22.74 & 23.80 &  & 7.28 & 7.4 & 7.36 & 7.38 & 7.79 \\
\multicolumn{1}{l}{} & \multicolumn{1}{l}{\textit{(3.04)}} & \multicolumn{1}{l}{\textit{(3.16)}} & \multicolumn{1}{l}{\textit{(2.50)}} & \multicolumn{1}{l}{\textit{(2.66)}} & \multicolumn{1}{l}{\textit{(2.95)}} &  & \textit{(0.73)} & \textit{(0.57)} & \textit{(0.46)} & \textit{(0.30)} & \textit{(0.31)} \\
Percentage & 70 & 90 & 100 &  &  &  & 93 & 100 & 100 &  &  \\ \hline
\multicolumn{1}{l}{} & \multicolumn{1}{l}{} & \multicolumn{1}{l}{} & \multicolumn{1}{l}{} & \multicolumn{1}{l}{} & \multicolumn{1}{l}{} &  & \multicolumn{1}{l}{} & \multicolumn{1}{l}{} & \multicolumn{1}{l}{} & \multicolumn{1}{l}{} & \multicolumn{1}{l}{} \\
 & \multicolumn{11}{c}{AR(1)} \\
OLS & 27.05 & 28.63 & 30.84 & 32.27 & 33.06 &  & 10.34 & 10.27 & 10.25 & 10.36 & 10.37 \\
\multicolumn{1}{l}{} & \multicolumn{1}{l}{\textit{(5.11)}} & \multicolumn{1}{l}{\textit{(4.14)}} & \multicolumn{1}{l}{\textit{(4.04)}} & \multicolumn{1}{l}{\textit{(3.90)}} & \multicolumn{1}{l}{\textit{(4)}} &  & \textit{(0.96)} & \textit{(0.82)} & \textit{(0.50)} & \textit{(0.43)} & \textit{(0.38)} \\
GLS & 18.01 & 19.62 & 20.83 & 22.22 & 22.92 &  & 7.17 & 7.10 & 7.06 & 7.16 & 7.14 \\
\multicolumn{1}{l}{} & \multicolumn{1}{l}{\textit{(3.33)}} & \multicolumn{1}{l}{\textit{(2.75)}} & \multicolumn{1}{l}{\textit{(2.47)}} & \multicolumn{1}{l}{\textit{(2.88)}} & \multicolumn{1}{l}{\textit{(2.82)}} &  & \textit{(0.66)} & \textit{(0.52)} & \textit{(0.37)} & \textit{(0.29)} & \textit{(0.25)} \\
FGLS & 20.03 & 24.04 & 30.78 &  &  &  & 7.91 & 8.73 & 10.22 &  &  \\
\multicolumn{1}{l}{} & \multicolumn{1}{l}{\textit{(3.57)}} & \multicolumn{1}{l}{\textit{(3.66)}} & \multicolumn{1}{l}{\textit{(4.04)}} & \multicolumn{1}{l}{\textit{}} & \multicolumn{1}{l}{\textit{}} &  & \textit{(0.80)} & \textit{(0.66)} & \textit{(0.50)} & \textit{} & \textit{} \\
FGLasso & 20.34 & 22.73 & 24.72 & 26.95 & 28.19 &  & 7.90 & 8.21 & 8.38 & 8.60 & 8.68 \\
\multicolumn{1}{l}{} & \multicolumn{1}{l}{\textit{(3.63)}} & \multicolumn{1}{l}{\textit{(3.57)}} & \multicolumn{1}{l}{\textit{(2.79)}} & \multicolumn{1}{l}{\textit{(3.75)}} & \multicolumn{1}{l}{\textit{(2.99)}} &  & \textit{(0.79)} & \textit{(0.63)} & \textit{(0.41)} & \textit{(0.34)} & \textit{(0.31)} \\
Percentage & 44 & 74 & 97 &  &  &  & 51 & 98 & 100 &  &  \\ \hline
\multicolumn{1}{l}{} & \multicolumn{1}{l}{} & \multicolumn{1}{l}{} & \multicolumn{1}{l}{} & \multicolumn{1}{l}{} & \multicolumn{1}{l}{} &  & \multicolumn{1}{l}{} & \multicolumn{1}{l}{} & \multicolumn{1}{l}{} & \multicolumn{1}{l}{} & \multicolumn{1}{l}{} \\
 & \multicolumn{11}{c}{Dense} \\
OLS & 18.32 & 19.5 & 21.21 & 22.59 & 22.7 &  & 7.1 & 7.08 & 7.06 & 7.1 & 7.13 \\
\multicolumn{1}{l}{} & \multicolumn{1}{l}{\textit{(3.49)}} & \multicolumn{1}{l}{\textit{(2.97)}} & \multicolumn{1}{l}{\textit{(2.81)}} & \multicolumn{1}{l}{\textit{(2.58)}} & \multicolumn{1}{l}{\textit{(2.6)}} &  & \textit{(0.76)} & \textit{(0.54)} & \textit{(0.40)} & \textit{(0.3)} & \textit{(0.24)} \\
GLS & 17.68 & 18.48 & 20.37 & 21.53 & 21.78 &  & 6.79 & 6.77 & 6.75 & 6.8 & 6.83 \\
\multicolumn{1}{l}{} & \multicolumn{1}{l}{\textit{(3.34)}} & \multicolumn{1}{l}{\textit{(2.72)}} & \multicolumn{1}{l}{\textit{(2.63)}} & \multicolumn{1}{l}{\textit{(2.36)}} & \multicolumn{1}{l}{\textit{(2.61)}} &  & \textit{(0.74)} & \textit{(0.51)} & \textit{(0.35)} & \textit{(0.28)} & \textit{(0.23)} \\
FGLS & 18.91 & 20.9 & 21.27 &  &  &  & 7.36 & 7.66 & 7.08 &  &  \\
\multicolumn{1}{l}{} & \multicolumn{1}{l}{\textit{(3.83)}} & \multicolumn{1}{l}{\textit{(2.98)}} & \multicolumn{1}{l}{\textit{(2.89)}} & \multicolumn{1}{l}{\textit{}} & \multicolumn{1}{l}{\textit{}} &  & \textit{(0.80)} & \textit{(0.56)} & \textit{(0.40)} & \textit{} & \textit{} \\
FGLasso & 17.88 & 19.43 & 21.10 & 22.31 & 22.47 &  & 6.96 & 7.02 & 7.03 & 7.1 & 7.12 \\
\multicolumn{1}{l}{} & \multicolumn{1}{l}{\textit{(3.40)}} & \multicolumn{1}{l}{\textit{(2.82)}} & \multicolumn{1}{l}{\textit{(2.85)}} & \multicolumn{1}{l}{\textit{(2.34)}} & \multicolumn{1}{l}{\textit{(2.56)}} &  & \textit{(0.76)} & \textit{(0.53)} & \textit{(0.38)} & \textit{(0.3)} & \textit{(0.2)} \\
Percentage & 68 & 70 & 58 &  &  &  & 90 & 100 & 69 &  &  \\ \hline\hline
\end{tabular}} 
\caption{Small Sample MC Results}
\label{Tab1}
\begin{tablenotes}[flushleft]
\item \floatfoot{Notes: This table reports $\|\widehat{\beta}_{OLS}-\beta\|$, $\|\widehat{\beta}_{GLS}-\beta\|$, $\|\widehat{\beta}_{FGLS}-\beta\|$, $\|\widehat{\beta}_{FGLasso}-\beta\|$ by $l_\infty (\times 100)$, RMSE$(\times 100)$ when $\Omega$ is generated from different designs, see details in section 4.1. The percentage means the number of times $\|\widehat{\beta}_{FGLasso}-\beta\|\leq \|\widehat{\beta}_{FGLS}-\beta\|$ out of 100 simulations. The value in the parenthesis is the corresponding standard deviation$(\times 100)$. Note when $N>T$, the FGLS estimator is not well defined. All the reported results are based on 100 simulation replications.}
\end{tablenotes}
\end{threeparttable}
\end{table}

	\section{Conclusion}
	This paper proposes a new estimator $\widehat{\beta}_{FGLasso}$ in order to deal with high dimensional SUR model.
We show that under certain conditions, as $N,T$ goes to infinity at a certain rate, our FGLasso estimator is asymptotically equivalent to the GLS estimator, thus more efficient than the OLS estimator. Further, if the nonzero entries per row $(D_N)$ in the precision matrix grows much slower than the number of the equation $N$, the FGLasso estimator works well even when the number of the equations in the system $(N)$ is greater than sample size ($T$). 
	
	The key assumption under which the FGLasso estimator performs well is the sparsity of true precision matrix $\Omega$. As our knowledge, there is no consensus in the literature on how to test the true precision matrix is sparse or not, and we admit that the applicability our estimator to any general economics data remains questionable if the sparsity condition does not hold. However, we believe this is the price to pay in order to recover the $N$ by $N$ precision matrix with a limited sample size.

	\newpage
	
	\section{Appendix: Proof}
	
	\renewcommand{\theequation}{A.\arabic{equation}}  \setcounter{equation}{0}
	\renewcommand{\thetheorem}{A.\arabic{theorem}}  \setcounter{theorem}{0}
	\renewcommand{\thelemma}{A.\arabic{lemma}}  \setcounter{lemma}{0}
	\renewcommand{\thefigure}{A.\arabic{figure}}  \setcounter{figure}{0}
	
		\subsection*{Lemmas}
	Before we start the proof of Proposition \ref{prop.uniform.rate} and \ref{prop.asy.gls}, we show the following lemmas.
	
	We first summarize some useful norm inequalities from Chapter 9 in \cite{bernstein2005matrix} and Chapter in \cite{horn1990matrix}. For matrix $A$, let $s_r(A), s_{\rm max}(A), s_{\rm min}(A)$ denote the $r^{th}$, the largest, and the smallest sigular value of matrix $A$, respectively. 
	\begin{lemma} For any matrix $A,B \in \mathbb{R}^{m\times n}$, $F\in \mathbb{R}^{n \times l}$ and column vector $b \in \mathbb{R}^{n}$, we have :
		\begin{itemize}
			\item[(i)] $\vertii{A}_\infty \leq s_{\mathrm{max}}(A) \leq \vertiii{A}_\infty;$
			\item[(ii)] $\vertiii{A}_\infty \leq \sqrt{n} s_{\mathrm{max}}(A);$
			\item[(iii)] $\vertii{A+B}_\infty \leq \vertii{A}_\infty + \vertii{B}_\infty;$
			\item[(iv)] $\| AB' \|_{op} \leq \| A \|_{op} \| B' \|_{op};$
			\item[(v)] $\vertii{Ab}_\infty \leq \vertiii{A}_\infty \vertii{b}_\infty;$
			\item[(vi)] $|s_i(A)-s_i(B)|\leq \vertii{A-B}_{op}$ for each $i=1,2,\cdots,\mathrm{min}\{m,n\};$
			\item[(vii)] $s_{\mathrm{min}}(AF) \geq s_{\mathrm{min}}(A) s_{\mathrm{min}}(F);$
			\item[(viii)]For positive definite square matrix $\tilde{A}\in \mathbb{R}^{m \times m}$, $\tilde{B}\in \mathbb{R}^{n \times n}$, $s_{\mathrm{min}}(\tilde{A} \otimes \tilde{B})=s_{\mathrm{min}}(\tilde{A}) s_{\mathrm{min}}(\tilde{B})$.
		\end{itemize}
		\label{lemma: useful facts}
	\end{lemma}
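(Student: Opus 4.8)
All eight statements are standard facts from matrix analysis, and the plan is to organize the argument by the tool each part requires rather than to proceed in listed order. Parts (iii), (iv), and (v) are purely definitional. For (iii) I would write $\vertii{A+B}_\infty=\max_{i,j}|A_{ij}+B_{ij}|\le\max_{i,j}(|A_{ij}|+|B_{ij}|)\le\vertii{A}_\infty+\vertii{B}_\infty$. For (v) I would bound each coordinate by $|(Ab)_i|\le\sum_j|A_{ij}||b_j|\le\vertii{b}_\infty\sum_j|A_{ij}|$ and take the maximum over $i$, which reproduces $\vertiii{A}_\infty\vertii{b}_\infty$; this is exactly the statement that $\vertiii{\cdot}_\infty$ is the matrix norm induced by the vector $\ell_\infty$ norm. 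Part (iv) is submultiplicativity of the operator norm, which follows from $\vertii{AB'}_{op}=\sup_{\vertii{x}_2=1}\vertii{AB'x}_2\le\vertii{A}_{op}\sup_{\vertii{x}_2=1}\vertii{B'x}_2=\vertii{A}_{op}\vertii{B'}_{op}$.

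The remaining parts rest on the variational (Rayleigh/Courant--Fischer) characterization of singular values. For the left inequality in (i) I would use $|A_{ij}|=|e_i'Ae_j|\le\vertii{Ae_j}_2\le s_{\mathrm{max}}(A)$, so the entrywise maximum is dominated by $s_{\mathrm{max}}(A)$. The right inequality $s_{\mathrm{max}}(A)\le\vertiii{A}_\infty$ I would obtain from the general bound $s_{\mathrm{max}}(A)=\vertii{A}_{op}\le\sqrt{\vertiii{A'}_\infty\,\vertiii{A}_\infty}$, i.e.\ the spectral norm is at most the geometric mean of the maximum column-sum and maximum row-sum norms; in the symmetric case relevant to the covariance and precision matrices of this paper the two coincide and the bound collapses to $\vertiii{A}_\infty$. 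For (ii), applying Cauchy--Schwarz row by row gives $\sum_j|A_{ij}|\le\sqrt{n}\,\vertii{e_i'A}_2\le\sqrt{n}\,s_{\mathrm{max}}(A)$, and maximizing over $i$ yields $\vertiii{A}_\infty\le\sqrt{n}\,s_{\mathrm{max}}(A)$.

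Part (vi) is Weyl's perturbation inequality for singular values: writing $s_i(\cdot)$ through the min--max formula over $i$-dimensional subspaces and treating $A=B+(A-B)$, the displacement of each singular value is controlled by $\vertii{A-B}_{op}$, giving $|s_i(A)-s_i(B)|\le\vertii{A-B}_{op}$. For (vii) I would chain the two one-sided bounds $\vertii{AFx}_2\ge s_{\mathrm{min}}(A)\vertii{Fx}_2\ge s_{\mathrm{min}}(A)\,s_{\mathrm{min}}(F)\vertii{x}_2$ and minimize over $\vertii{x}_2=1$; the only delicate point is that the bound $\vertii{Ay}_2\ge s_{\mathrm{min}}(A)\vertii{y}_2$ is to be applied on the relevant full-column-rank factors, since otherwise $s_{\mathrm{min}}=0$ and the inequality is trivially true. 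Finally, for (viii) I would invoke the eigenstructure of Kronecker products: the eigenvalues of $\tilde A\otimes\tilde B$ are the pairwise products $\lambda_i(\tilde A)\lambda_j(\tilde B)$, and because $\tilde A,\tilde B$ are positive definite their singular values equal their eigenvalues, so $s_{\mathrm{min}}(\tilde A\otimes\tilde B)=\min_{i,j}\lambda_i(\tilde A)\lambda_j(\tilde B)=s_{\mathrm{min}}(\tilde A)\,s_{\mathrm{min}}(\tilde B)$.

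The genuinely non-elementary ingredients are the singular-value perturbation bound (vi) and the right-hand inequality of (i); both are where the variational characterization does the real work, whereas everything else reduces to the definitions, H\"older/Cauchy--Schwarz, and submultiplicativity. I expect (vi) to be the main obstacle to state cleanly, since it requires the min--max representation of singular values rather than a direct coordinate computation. All parts are nevertheless available essentially verbatim from Chapter 9 of \cite{bernstein2005matrix} and from \cite{horn1990matrix}, so in the write-up I would cite those sources and supply only the short self-contained arguments sketched above.
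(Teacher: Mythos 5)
The paper does not actually prove this lemma: it is presented purely as a list of imported facts with a pointer to Chapter 9 of \cite{bernstein2005matrix} and to \cite{horn1990matrix}, so your proposal supplies arguments where the paper supplies none. Your derivations of (ii)--(viii) are the standard ones and are correct: (iii)--(v) are definitional, (ii) is Cauchy--Schwarz row by row, (iv) is submultiplicativity of the induced $2$-norm, (vi) is Weyl's perturbation inequality via the min--max characterization, (vii) is the chained lower bound (your caveat about which factor must have full column rank is the right one to make, since $s_{\mathrm{min}}(M)=\min_{\vertii{x}_2=1}\vertii{Mx}_2$ only for matrices with at least as many rows as columns), and (viii) is the Kronecker eigenvalue identity. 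The one place where your write-up does more than reprove a known fact is the right-hand inequality of (i): as stated for arbitrary $A\in\mathbb{R}^{m\times n}$ the claim $s_{\mathrm{max}}(A)\leq\vertiii{A}_\infty$ is simply false --- take $A=(1,1)'$, for which $\vertiii{A}_\infty=1$ but $s_{\mathrm{max}}(A)=\sqrt{2}$, or the square matrix $\mathbf{1}e_1'$ with unit row sums and operator norm $\sqrt{n}$. The correct general bound is exactly the geometric mean $s_{\mathrm{max}}(A)\leq\sqrt{\vertiii{A'}_\infty\vertiii{A}_\infty}$ that you derive, which collapses to the stated inequality only when $\vertiii{A'}_\infty\leq\vertiii{A}_\infty$, in particular for symmetric matrices. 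Since every matrix to which the paper applies (i) (namely $\widehat{A}_{NT}-A_{NT}$ and its relatives, in the proofs of Lemma A.3 and Proposition 3.1) is symmetric, your restriction is harmless for the paper's purposes, but it would be worth stating explicitly that the lemma as written needs a symmetry (or $\vertiii{A'}_\infty\leq\vertiii{A}_\infty$) hypothesis for that one inequality; as it stands your proof correctly establishes a weaker statement than the one claimed, and the stronger claim is unprovable because it is not true.
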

	
	For convenience, we define the following notations:
	\[A_{NT}:=  \frac{1}{T}\sum_{t=1}^TX_t \Omega X_t'=\frac{1}{T} X'(\Omega \otimes I_T)X, \quad B_{NT}:=\frac{1}{\sqrt{T}}\sum_{t=1}^T X_t \Omega U_t , \]
	\[\widehat{A}_{NT}:= \frac{1}{T}\sum_{t=1}^TX_t\widehat{\Omega}_{gl} X_t'=\frac{1}{T} X'(\widehat{\Omega}_{gl}\otimes I_T)X, \quad \widehat{B}_{NT}:= \frac{1}{\sqrt{T}}\sum_{t=1}^T X_t \widehat{\Omega}_{gl} U_t, \]
	where $X_t$ and $U_t$ are defined in model (\ref{m.system}), $\widehat{\Omega}_{gl}$ is the graphical LASSO estimator defined in (\ref{e.gl}), $X$ is defined in model (\ref{m.fullequation}) and $I_T$ is the $T\times T$ identity matrix.

	\begin{lemma} \label{lemma 2.5}
		Suppose that Assumption \ref{assumption 3} holds. Then, there exists a constant $c'>0$ such that $s_{\mathrm{min}}(A_{NT})>c'$ wp1 as $N,T \to \infty$.
	\end{lemma}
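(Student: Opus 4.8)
The plan is to exploit the Kronecker representation $A_{NT}=\frac{1}{T}X'(\Omega\otimes I_T)X$ in order to peel off the factor $\Omega$ and reduce the claim to a lower bound on the smallest eigenvalue of the block-diagonal Gram matrix $\frac{1}{T}X'X$. Concretely, for any unit vector $v\in\mathbb{R}^{KN}$ I would write
\[
v'A_{NT}v=\frac{1}{T}(Xv)'(\Omega\otimes I_T)(Xv)\geq s_{\min}(\Omega\otimes I_T)\cdot\frac{1}{T}\vertii{Xv}_2^2,
\]
and by Lemma \ref{lemma: useful facts}(viii), $s_{\min}(\Omega\otimes I_T)=s_{\min}(\Omega)\,s_{\min}(I_T)=s_{\min}(\Omega)$. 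Since $\frac{1}{T}\vertii{Xv}_2^2=v'\big(\frac{1}{T}X'X\big)v$, minimizing over unit $v$ yields $s_{\min}(A_{NT})\geq s_{\min}(\Omega)\,s_{\min}\big(\frac{1}{T}X'X\big)$. Assumption \ref{assumption 3}(i) supplies a positive lower bound on $s_{\min}(\Omega)$ (bounded away from zero uniformly in $N$), so it remains to bound $s_{\min}\big(\frac{1}{T}X'X\big)$ away from zero.

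Because $X$ is block diagonal with blocks $X_i\in\mathbb{R}^{T\times K}$, the matrix $\frac{1}{T}X'X=\mathrm{diag}\big(\frac{1}{T}X_1'X_1,\dots,\frac{1}{T}X_N'X_N\big)$ is itself block diagonal, so its smallest eigenvalue equals the minimum of the blockwise smallest eigenvalues, $s_{\min}\big(\frac{1}{T}X'X\big)=\min_{1\leq i\leq N}s_{\min}\big(\frac{1}{T}X_i'X_i\big)$. The task therefore becomes showing that this minimum over a growing number ($N\to\infty$) of $K\times K$ Gram matrices stays bounded below.

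For this I would invoke the singular-value perturbation bound of Lemma \ref{lemma: useful facts}(vi), which gives, uniformly in $i$,
\[
s_{\min}\Big(\tfrac{1}{T}X_i'X_i\Big)\geq s_{\min}(W_{ii})-\vertii{\tfrac{1}{T}X_i'X_i-W_{ii}}_{op}\geq c_0-\vertii{\tfrac{1}{T}X_i'X_i-W_{ii}}_{op},
\]
where the last inequality uses Assumption \ref{assumption 3}(iii). Since each block is $K\times K$ with $K=\mathcal{O}(1)$ fixed, the operator norm is controlled by the element-wise max norm through $\vertii{M}_{op}\leq\vertii{M}_F\leq K\vertii{M}_\infty$, so the uniform convergence in Assumption \ref{assumption 3}(ii), $\sup_i\vertii{\frac{1}{T}X_i'X_i-W_{ii}}_\infty=o_p(1)$, upgrades to $\sup_i\vertii{\frac{1}{T}X_i'X_i-W_{ii}}_{op}=o_p(1)$. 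Hence $\min_i s_{\min}\big(\frac{1}{T}X_i'X_i\big)\geq c_0-o_p(1)\geq c_0/2$ wp1, and combining with the first display gives $s_{\min}(A_{NT})\geq s_{\min}(\Omega)\,c_0/2=:c'>0$.

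The step I expect to be the crux is the uniform control in the last paragraph: the Kronecker identity and the reduction to $\frac{1}{T}X'X$ are mechanical, but the lower bound on $\min_{i\leq N}s_{\min}\big(\frac{1}{T}X_i'X_i\big)$ genuinely requires the \emph{uniform} (in $i$) convergence furnished by Assumption \ref{assumption 3}(ii); a merely pointwise law of large numbers for each $\frac{1}{T}X_i'X_i$ would not suffice to control a minimum over an index set whose cardinality diverges with $N$. The only routine care needed is the passage from the element-wise max norm, in which the assumption is stated, to the operator norm, in which the perturbation inequality is phrased, and this is harmless precisely because $K$ is fixed.
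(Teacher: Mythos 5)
Your proof is correct and follows essentially the same route as the paper: both reduce $s_{\min}(A_{NT})$ to $s_{\min}(\Omega)\cdot\min_{i}s_{\min}\bigl(\tfrac{1}{T}X_i'X_i\bigr)$ via the Kronecker/block-diagonal structure and then apply the singular-value perturbation inequality together with the uniform (in $i$) convergence of Assumption \ref{assumption 3}(ii). Your only additions are cosmetic --- deriving the first bound through the Rayleigh quotient rather than citing Lemma \ref{lemma: useful facts}(vii) directly, and spelling out the harmless passage from the element-wise max norm to the operator norm, which the paper leaves implicit.
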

	
	\begin{proof}
		Since $A_{NT}=\frac{1}{T} X'(\Omega \otimes I_T)X$, where $X=diag(X_1,\cdots, X_N)$, from Lemma \ref{lemma: useful facts} $(vii),(viii)$, we have:
		\begin{equation} \label{proof:smin(A)}
		s_{min}(A_{NT}) \geq \frac{1}{T} s_{min}^2(X) s_{min}(\Omega)
		= \min_{i=1,2,\cdots,N} \left\{s_{min}\left(\frac{X_i'X_i}{T}\right)\right\}s_{min}(\Omega).
		\end{equation}
		
		Notice that by Assumption \ref{assumption 3}, $\frac{1}{T} X_i' X_i \xrightarrow{p} W_{ii}$ as $T \to \infty$ uniformly in $i$ and $s_{min}(W_{ii}) \geq c_{ii}'$ for some constants $c_{ii}'>0$. By Lemma \ref{lemma: useful facts} $(vi)$:
		\begin{align*}
			\underset{i=1,2,\cdots,N}{\mathrm{min}}\left\{s_{min}\left(\frac{1}{T}(X_i'X_i)\right)\right\} 
			&= 	\min_{i=1,...,N} s_{min}(W_{ii}) - \max_{i=1,...,N} \left| s_{min}\left(\frac{X_i'X_i}{T} \right) - s_{min}(W_{ii})  \right| \\
			&\geq \min_{i=1,...,N} c_{ii}^{\prime} + o_p(1).
		\end{align*}
		Notice by Assumption \ref{assumption 3}, we have $s_{\min}(\Omega) \geq c > 0 $. Therefore, we have the required result for the lemma that there exists a positive constant $c'$ such that
		\begin{equation}
		s_{min}(A_{NT})\geq c s_{min}(\Omega)\geq c'.
		\end{equation}
	\end{proof}
	
	Recall the definition from Assumption \ref{p.assumption2} and \ref{assumption 3} that, $D_N$ is the maximum number of nonzero entries per row in true precision matrix $\Omega$, and $C_N\geq \underset{i,j=1,2,\hdots, N}{\max} \|  W_{ij} \|_\infty$, where $\frac{1}{T}X_i' X_j \xrightarrow{p} W_{ij}$.
	
	\begin{lemma} \label{lemma 3}
		Suppose that Assumption \ref{assumption1}, \ref{p.assumption2} and \ref{assumption 3} hold. Then, 
		\begin{itemize}
			\item[(a)] $\vertiii{\widehat{A}_{NT}-A_{NT}}_\infty \leq \mathcal{O}_p \left(C_ND_N\sqrt{\frac{\log N}{T}}\right)$
			\item[(b)] $\frac{1}{s_{\mathrm{min}}(\widehat{A}_{NT})}\leq \frac{1}{c' + \mathcal{O}_p\left(C_N D_N \sqrt{ log N / T} \right)}$
		\end{itemize}
	\end{lemma}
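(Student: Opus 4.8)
The plan is to prove (a) first, since (b) follows from it by a short singular-value perturbation argument. Throughout I write $\Delta := \widehat{\Omega}_{gl} - \Omega$, so that $\widehat{A}_{NT} - A_{NT} = \frac{1}{T} X'(\Delta \otimes I_T) X$.

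For part (a), the first step is to exploit the block-diagonal structure $X = \mathrm{diag}(X_1, \ldots, X_N)$ to compute the blocks of the difference explicitly. Since $(\Delta \otimes I_T)$ has $(i,j)$ block $\Delta_{ij} I_T$, the $(i,j)$ block (of size $K \times K$) of $\widehat{A}_{NT} - A_{NT}$ is exactly $\Delta_{ij}\, \frac{1}{T}X_i' X_j$. I would then bound the maximum absolute row-sum norm $\vertiii{\cdot}_\infty$ by fixing an arbitrary scalar row indexed by a block $i$ and a coordinate $a \le K$, and summing absolute entries over all column blocks $j$ and coordinates $b \le K$; this factorizes as
\begin{equation}
\vertiii{\widehat{A}_{NT} - A_{NT}}_\infty \le K \Big( \max_{i,j}\big\| \tfrac{1}{T}X_i'X_j \big\|_\infty \Big)\, \vertiii{\Delta}_\infty .
\end{equation}
The three factors are then controlled separately. $K = \mathcal{O}(1)$ by the standing convention. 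For $\vertiii{\Delta}_\infty$ I would combine the two conclusions of Lemma \ref{lemma.ravikumar}: the exact-recovery property (ii) forces $\Delta_{ij} = 0$ whenever $\Omega_{ij} = 0$, so each row of $\Delta$ has at most $D_N$ nonzero entries, whence $\vertiii{\Delta}_\infty \le D_N \|\Delta\|_\infty$; and the element-wise bound (i) gives $\|\Delta\|_\infty = \mathcal{O}_p(\sqrt{\log N / T})$. Hence $\vertiii{\Delta}_\infty = \mathcal{O}_p(D_N \sqrt{\log N/T})$. Finally, for the regressor factor I would use Assumption \ref{assumption 3}(iv), $\max_{i,j}\|W_{ij}\|_\infty \le C_N$, together with the convergence $\frac{1}{T}X_i'X_j \xrightarrow{p} W_{ij}$ in Assumption \ref{assumption 3}(ii), to conclude $\max_{i,j}\|\frac{1}{T}X_i'X_j\|_\infty = \mathcal{O}_p(C_N)$. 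Multiplying the three bounds delivers the claimed rate $\mathcal{O}_p\big(C_N D_N \sqrt{\log N / T}\big)$.

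For part (b), I would start from the reverse triangle inequality for singular values, Lemma \ref{lemma: useful facts}(vi), in the form $s_{\min}(\widehat{A}_{NT}) \ge s_{\min}(A_{NT}) - \|\widehat{A}_{NT} - A_{NT}\|_{op}$. Lemma \ref{lemma 2.5} provides $s_{\min}(A_{NT}) \ge c'$ wp1, and Lemma \ref{lemma: useful facts}(i) lets me pass from the operator norm to the row-sum norm, $\|\widehat{A}_{NT} - A_{NT}\|_{op} \le \vertiii{\widehat{A}_{NT} - A_{NT}}_\infty$, so that part (a) bounds it by $\mathcal{O}_p(C_N D_N \sqrt{\log N/T})$. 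Combining, $s_{\min}(\widehat{A}_{NT}) \ge c' + \mathcal{O}_p(C_N D_N \sqrt{\log N/T})$, and taking reciprocals gives the stated bound on $1/s_{\min}(\widehat{A}_{NT})$.

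The main obstacle I anticipate is establishing the uniform regressor bound $\max_{i,j}\|\frac{1}{T}X_i'X_j\|_\infty = \mathcal{O}_p(C_N)$. Assumption \ref{assumption 3} only supplies pointwise convergence $\frac{1}{T}X_i'X_j \xrightarrow{p} W_{ij}$ for the off-diagonal blocks (uniform control is stated only for the diagonal blocks), yet the row-sum bound requires controlling all $N^2$ blocks simultaneously as $N \to \infty$. Making this rigorous requires a uniform-in-$(i,j)$ deviation bound on $\max_{i,j}\|\frac{1}{T}X_i'X_j - W_{ij}\|_\infty$, which one obtains from a tail (e.g.\ sub-Gaussian) argument combined with a union bound over the $\mathcal{O}(N^2)$ pairs; the examples following Assumption \ref{assumption 3} indicate the regimes ($T > cN$, bounded $\|W\|_{op}$) under which this holds with $C_N = \mathcal{O}_p(1)$. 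Everything else is a routine assembly of the norm inequalities in Lemma \ref{lemma: useful facts} together with the two parts of Lemma \ref{lemma.ravikumar}.
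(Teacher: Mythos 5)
Your proposal follows essentially the same route as the paper's proof: the same block decomposition of $\widehat{A}_{NT}-A_{NT}$ into $\Delta_{ij}\tfrac{1}{T}X_i'X_j$, the same combination of Lemma \ref{lemma.ravikumar}'s exact recovery and element-wise rate to get $\vertiii{\Delta}_\infty \le D_N\|\Delta\|_\infty = \mathcal{O}_p(D_N\sqrt{\log N/T})$, the same $C_N$ bound on the regressor blocks, and the identical singular-value perturbation argument for part (b). The uniformity-over-$(i,j)$ issue you flag for $\max_{i,j}\|\tfrac{1}{T}X_i'X_j\|_\infty$ is real but is also elided in the paper, which simply substitutes $W_{ij}$ for $\tfrac{1}{T}X_i'X_j$ at the corresponding step, so your treatment is if anything slightly more careful than the original.
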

	
	\begin{proof} 
	(a).
		Let $\Omega=[\sigma_{ij}]$, $(i,j=1,2,\cdots,N)$.
		\begin{equation}
		A_{NT}=\frac{1}{T} X'(\Omega \otimes I_T)X=\frac{1}{T} \begin{bmatrix}
		\sigma_{11}X_1'X_1 & \sigma_{12}X_1'X_2 & \dots & \sigma_{1N} X_1'X_N \\
		\sigma_{21}X_2'X_1 & \sigma_{22}X_2'X_2 & \dots & \sigma_{2N}X_2'X_N \\
		\vdots & \vdots & \ddots & \vdots \\
		\sigma_{N1}X_N'X_1 & \sigma_{N2}X_N'X_2 & \dots & \sigma_{NN}X_N'X_N 
		\end{bmatrix}
		\end{equation}
		
		Denote $\widehat{\Omega}_{gl}-\Omega=\Delta_\Omega$. Then we have 
		\begin{equation*}
		\begin{split}
		\vertiii{\widehat{A}_{NT}-A_{NT}}_\infty&=\vertiii{\frac{1}{T}X'(\Delta_\Omega \otimes I_T)X}_\infty \\
		& \leq \underset{i=1,...,N}{\mathrm{max}} \sum_{j=1}^{N} |\Delta_{\Omega,ij}| \vertiii{W_{ij}}_\infty \\
		& \leq  \underset{i,j=1,2,\cdots,N}{\mathrm{max}}\vertiii{W_{ij}}_\infty \underset{i=1,...,N}{\mathrm{max}} \sum_{j=1}^{N} |\Delta_{\Omega,ij}| \\
		& \leq \mathcal{O}(C_N)\underset{i=1,...,N}{\mathrm{max}} \sum_{j=1}^{N} |\Delta_{\Omega,ij}| \\
		& \leq \mathcal{O}\left(C_N D_N\sqrt{\frac{\log N}{T}}\right)，
		\end{split}
		\end{equation*}
		where the second last inequality follows by the definition $ C_N:= \underset{i,j=1,2,\cdots,N}{\mathrm{max}}\vertii{W_{ij}}_\infty$. The last inequality is from Lemma \ref{lemma.ravikumar} that $\vertii{\Delta_\Omega}_\infty=\mathcal{O}_p\left(\sqrt{\frac{\log N}{T}}\right)$ and number of nonzero entries per row of $\Delta_\Omega$ is at most $D_N$.
		
		\bigskip
		
		(b). 
		According to Lemma \ref{lemma: useful facts}(vi) and Part(a), we have
		\begin{equation*}
		|s_{\mathrm{min}}(\widehat{A}_{NT})-s_{\mathrm{min}}(A_{NT})|\leq \vertii{\widehat{A}_{NT}-A_{NT}}_{op}\leq \vertiii{\widehat{A}_{NT}-A_{NT}}_\infty \leq \mathcal{O}_p \left(C_N D_N \sqrt{\frac{\log N}{T}} \right).
		\end{equation*}
		Combining this with Lemma \ref{lemma 2.5}, we have
		\begin{align*}
		\frac{1}{s_{\mathrm{min}}(\widehat{A}_{NT})} 
		&\leq \frac{1}{ s_{\mathrm{min}}(A_{NT})-|s_{\mathrm{min}}(\widehat{A}_{NT})-s_{\mathrm{min}}(A_{NT})|} \\
		&\leq \frac{1}{c' + \mathcal{O}_p\left(C_N D_N \sqrt{ \frac{\log N}{T}} \right)}.
		\end{align*}
	\end{proof}
	

    Let $G\in \mathbb{R}^{KN^2}$ be a vector consisting of $\left(\frac{1}{\sqrt{T}}\sum_{t=1}^T X_{it,k}U_{jt} : 1 \leq i,j \leq N, 1 \leq k \leq K \right)$. 
	
	%
	
	\begin{lemma}[sub-Gaussian concentration] \label{lemma:sub-Gaussian vector} Let $G_k$ be the $k^{th}$ element of $G$, for $k=1,...,KN^2$.
		Suppose that Assumption \ref{assumption1}, \ref{p.assumption2} and \ref{assumption 3} hold. Then, there exist positive constants $c$ and $c'$ such that the following two hold: 
		
		\begin{itemize}
			\item[$(a)$]	$\mathbb{P}(|G_i|> \xi \,|\, X )\leq 2 \mathrm{exp}\left(-\frac{c \xi^2}{C_N} \right)$ for all $1\leq i \leq KN^2$, where $c>0$ is a constant which doesn't dependent on $i$;
			
			\item[(b)] Conditional on $X$, $\vertii{G}_\infty \leq \mathcal{O}_p(\sqrt{C_N\log N})$.
		\end{itemize}
	\end{lemma}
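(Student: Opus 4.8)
The plan is to view each coordinate of $G$ as a weighted sum of independent sub-Gaussian random variables conditional on $X$, control its sub-Gaussian parameter uniformly in the index, and then pass from a single-coordinate tail bound to the maximum via a union bound. First I would fix a coordinate, say the one indexed by the triple $(a,b,k)$, so that the corresponding entry is $\frac{1}{\sqrt{T}}\sum_{t=1}^T X_{at,k}U_{bt}$, and condition on $X$. By Assumption \ref{p.assumption2}(ii) the variables $U_{bt}/\Sigma_{bb}$ are i.i.d. sub-Gaussian over $t$ with parameter $\sigma$, so each $U_{bt}$ is sub-Gaussian with parameter $\sigma\Sigma_{bb}$; since the $X_{at,k}$ are fixed given $X$, each product $X_{at,k}U_{bt}$ is then sub-Gaussian with parameter $|X_{at,k}|\sigma\Sigma_{bb}$. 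Using independence over $t$ and the additivity of squared sub-Gaussian parameters, I would conclude that $G_{(a,b,k)}$ is sub-Gaussian with parameter $\tau_{a,b,k}=\sigma\Sigma_{bb}\sqrt{\frac{1}{T}\sum_{t=1}^T X_{at,k}^2}$.

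The key step is to bound this parameter uniformly in the index. Writing $\frac{1}{T}\sum_{t=1}^T X_{at,k}^2=\big(\frac{1}{T}X_a'X_a\big)_{kk}$, Assumption \ref{assumption 3}(ii) and (iv) give $\max_{a,k}\big(\frac{1}{T}X_a'X_a\big)_{kk}\leq\max_a\vertii{W_{aa}}_\infty+o_p(1)\leq C_N+o_p(1)$ uniformly in $a$, while Assumption \ref{p.assumption2}(iv) gives $\Sigma_{bb}\leq\vertiii{\Sigma}_\infty=\kappa_\Sigma=\mathcal{O}(1)$. Hence on an event of probability tending to one, $\tau_{a,b,k}^2\leq\sigma^2\kappa_\Sigma^2(C_N+o_p(1))\leq c''C_N$ for a constant $c''$ independent of the index. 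Applying the standard Chernoff tail bound for sub-Gaussian variables, $\mathbb{P}(|G_i|>\xi\mid X)\leq 2\exp\big(-\xi^2/(4\tau_{a,b,k}^2)\big)\leq 2\exp\big(-c\xi^2/C_N\big)$ with $c=1/(4c'')$, which proves part (a); the constant $c$ does not depend on $i$ precisely because the parameter bound is uniform.

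For part (b) I would apply the union bound across all $KN^2$ coordinates. Taking $\xi=M\sqrt{C_N\log N}$ in part (a) yields $\mathbb{P}(\vertii{G}_\infty>\xi\mid X)\leq 2KN^2\exp(-cM^2\log N)=2KN^{2-cM^2}$, which tends to zero once $M$ is chosen large enough that $cM^2>2$ (recall $K=\mathcal{O}(1)$). This gives $\vertii{G}_\infty=\mathcal{O}_p(\sqrt{C_N\log N})$ conditional on $X$.

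The main obstacle is the uniformity needed in the first part: the exponential tail must hold with a single constant $c$ across all $KN^2$ indices, which requires the \emph{uniform} control $\sup_a\vertii{\frac{1}{T}X_a'X_a-W_{aa}}_\infty=o_p(1)$ of Assumption \ref{assumption 3}(ii) rather than mere pointwise convergence, together with the boundedness of the diagonal of $\Sigma$ via $\kappa_\Sigma$, and some care in carrying the $o_p(1)$ term from this law-of-large-numbers step through while conditioning on $X$. Once the uniform parameter bound $\tau_{a,b,k}^2\leq c''C_N$ is secured, both the per-coordinate Chernoff bound and the union bound are routine.
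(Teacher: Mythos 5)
Your proof is correct and, for part (a), follows essentially the same route as the paper: condition on $X$, exploit the conditional independence over $t$ and the sub-Gaussian moment generating function bound from Assumption \ref{p.assumption2}(ii) to show each coordinate is sub-Gaussian with parameter squared of order $\sigma^2\Sigma_{bb}^2\cdot\frac{1}{T}\sum_t X_{at,k}^2\lesssim C_N$, and use the uniform convergence in Assumption \ref{assumption 3}(ii) together with $\Sigma_{bb}\leq\kappa_\Sigma$ to make this bound hold uniformly over all $KN^2$ indices on an event of probability tending to one. For part (b) you diverge slightly in mechanism: you take a union bound over the $KN^2$ tail probabilities with $\xi=M\sqrt{C_N\log N}$, whereas the paper bounds $\mathbb{E}(\max_i G_i)$ directly via Jensen's inequality applied to $e^{\lambda\max_i G_i}\leq\sum_i e^{\lambda G_i}$, optimizes over $\lambda$ to get $\mathbb{E}(\max_i|G_i|)\leq c'\sqrt{C_N\log N}$, and then invokes Markov's inequality. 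Both devices are standard for maxima of finitely many sub-Gaussian variables and yield the same rate; the expectation route gives a cleaner in-expectation statement, while your union bound makes the required size of the constant $M$ (namely $cM^2>2$) explicit. No gaps.
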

	
	\begin{proof}
		(a).  In the proof due to notational simplicity, we skip writing conditioning on $X$ in the probablity and expectation notation. Recall $\Sigma$ is the covariance matrix of $U_t$, under Assumption \ref{p.assumption2},  $\mathbb{E}(e^{\lambda (U_{jt}/\Sigma_{jj})})\leq e^{\lambda^2 \sigma^2}$ for all $\lambda\in \mathbb{R}$, and $\Sigma_{jj}$ is bounded for all $1 \leq i\leq N$. Let $\bar{U}_{it}=U_{it}/\Sigma_{jj}$. Then,
		\begin{equation}
		\begin{split}
		\mathbb{P} \left(\frac{1}{\sqrt{T}}\sum_t X_{it,k}U_{jt} > \xi \right) 
		&= \mathbb{P}\left(e^{\lambda \frac{1}{\sqrt{T}}\sum_t X_{it,k}\bar{U}_{kt}} > e^{\lambda \xi  /\Sigma_{jj}}\right) \\
		&\leq e^{-\lambda \xi / \Sigma_{jj}} \mathbb{E} \left( e^{\lambda \frac{1}{\sqrt{T}}\sum_t X_{it,k}\bar{U}_{jt}} \right) \text{ (by  Markov's Inequality)} \\
		&=e^{-\lambda \xi / \Sigma_{jj}} \prod_{t=1}^T \mathbb{E} \left( e^{\lambda \frac{1}{\sqrt{T}}X_{it,k}\bar{U}_{jt}} \right) \text{  (Conditional Independency) }\\
		& \leq e^{-\lambda \xi / \Sigma_{jj}} e^{\lambda^2 \sigma^2 \frac{1}{T}\sum_{t} X_{it,k}^2}  \text{  (sub-Gaussian)} \\
		&=e^{-\frac{c \xi^2}{\sigma^2 \frac{1}{T}\sum X_{it,k}^2}}  \\
		& \leq e^{-\frac{c \xi^2}{C_N}} \quad {\rm wp1},	
		\end{split}	
		 \label{G_subgaussian}
		\end{equation}	
		where the last equality is derived from minimizing $\lambda>0$ respect to $\lambda \xi \Sigma_{jj}^* + \lambda^2 \sigma^2 \frac{1}{T}\sum_t X_{it,k}^2$.		
				
		Similarly, $P\left(\frac{1}{\sqrt{T}}\sum_t X_{it}U_{jt} < -\xi \right) \leq e^{ -\frac{c \xi^2}{C_N}}$.  This implies that $G_i$ is sub-Gaussian with parameter $c/C_N$ where $c$ doesn't dependent on $N$, i.e. \footnote{$c$ in (\ref{G_subgaussian}) might be different from $c$ in (\ref{G_i tail behavior}). By sub-Gaussian property, they differ from each other by at most an absolute constant factor (for example, see Proposition 2.5.2 in \cite{vershynin2018high}).},
		\begin{equation} \label{G_i tail behavior}
		\mathbb{E}\big(e^{\lambda G_i}\big)\leq e^{cC_N \lambda^2 }.
		\end{equation}
		
		\bigskip
		
		(b). The result in (b) follows from Markov inequality if we show $\exists c'>0 $, such that 		
		
		\begin{equation}
		\mathbb{E}\left(\underset{1\leq i \leq KN^2} {\mathrm{max}}|G_i| \, \big| \, X  \right)\leq c'  \sqrt{C_N \log N}. \label{sub_G_b}
		\end{equation}
		
		Consider $\max_{1 \leq i \leq KN^2}G_i $ first, for $\lambda >0$, by Jensen's inequality we have
		\begin{align} 
			e^{\lambda \mathbb{E} \left(\max_{1 \leq i \leq KN^2}G_i \right)} 
			&\leq \mathbb{E}\left(e^{\lambda \max_{1 \leq i \leq KN^2}G_i}\right)
			= \mathbb{E} \left(\max_{1\leq i \leq KN^2} e^{\lambda G_i} \right) \nonumber \\
			&\leq \sum_{i=1}^{KN^2} \mathbb{E}(e^{\lambda G_i}) 
			= \sum_{i=1}^{KN^2} e^{cC_N \lambda^2}\leq 
				KN^2e^{c C_N \lambda^2}.	  \label{supE}  
		\end{align}
		Take log on both sides of (\ref{supE}), we obtain for any $\lambda>0$, 
		\begin{equation}
		\mathbb{E} \left(\max_{1 \leq i \leq KN^2}G_i \right)
		\leq \frac{\log K+2\log N}{\lambda}+\lambda c C_N  .
		\end{equation}
		Minimizing the RHS with respect to $\lambda > 0$, we have   
		\[ 
		\mathbb{E} \left(\max_{1 \leq i \leq KN^2}G_i \right)
		\leq 2 \sqrt{(\log K + 2 \log N) c C_N } 
		\leq c'\sqrt{C_N \log N} 
		\]
		
		Then the result in (\ref{sub_G_b}) follows since $\underset{1 \leq i \leq KN^2}{\mathrm{max}}|G_i|=\underset{1 \leq i \leq 2KN^2}{\mathrm{max}}G_i$, where $G_{KN^2+j}=-G_j$ for $j=1,\cdots,KN^2$.
	\end{proof}

%
%
%
%
	
	\begin{lemma} \label{row norm of Omega}
	If Assumption \ref{p.assumption2} holds, $\vertiii{\widehat{\Omega}_{gl}-\Omega}_\infty \leq \mathcal{O}_p(D_N \sqrt{\frac{\log N}{T}})$.
	\end{lemma}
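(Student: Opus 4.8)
The plan is to convert the element-wise maximum-norm bound of Lemma \ref{lemma.ravikumar}(i) into a bound on the maximum absolute row sum $\vertiii{\cdot}_\infty$ by exploiting the sparsity-recovery property in Lemma \ref{lemma.ravikumar}(ii). Write $\Delta_\Omega := \widehat{\Omega}_{gl} - \Omega$. The whole argument hinges on the observation that each row of $\Delta_\Omega$ contains only a controlled number of nonzero entries, so that a crude ``number of terms times largest term'' bound already gives the right rate.

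First I would invoke the exact-recovery statement of Lemma \ref{lemma.ravikumar}(ii): since $E(\widehat{\Omega}_{gl}) \subseteq E(\Omega)$ wp1, whenever $\Omega_{ij} = 0$ we also have $\widehat{\Omega}_{gl,ij} = 0$, and hence $\Delta_{\Omega,ij} = 0$. Therefore the support of $\Delta_\Omega$ is contained in the support of $\Omega$. By Assumption \ref{p.assumption2}(i), $\Omega$ has at most $D_N$ nonzero entries per row, and so the same bound is inherited by $\Delta_\Omega$: in each row $i$, at most $D_N$ of the entries $\Delta_{\Omega,ij}$ can be nonzero.

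Next, for each fixed row $i$ I would bound the absolute row sum by the number of nonzero entries times the largest one:
\begin{equation*}
\sum_{j=1}^N |\Delta_{\Omega,ij}| = \sum_{j:\,\Omega_{ij}\neq 0} |\Delta_{\Omega,ij}| \leq D_N \max_{j} |\Delta_{\Omega,ij}| \leq D_N \vertii{\Delta_\Omega}_\infty .
\end{equation*}
Taking the maximum over $i$ gives $\vertiii{\Delta_\Omega}_\infty \leq D_N \vertii{\Delta_\Omega}_\infty$, and then Lemma \ref{lemma.ravikumar}(i), which states $\vertii{\Delta_\Omega}_\infty = \mathcal{O}_p\big(\sqrt{\log N / T}\big)$, yields the claimed rate $\mathcal{O}_p\big(D_N\sqrt{\log N / T}\big)$.

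The only real subtlety, and the step I would be most careful about, is the sparsity-containment claim: without it one could only bound the number of nonzero entries per row by $N$, producing the far weaker $N\sqrt{\log N/T}$. The high-probability nature of the recovery statement means the support containment, and hence the entire bound, holds only on an event of probability tending to one, which is exactly what the $\mathcal{O}_p$ notation records. I would also make sure that the sample-size condition $T \geq c D_N^2 \log N$ and the constants from Lemma \ref{lemma.ravikumar} are implicitly in force, so that both parts (i) and (ii) of that lemma apply simultaneously on the same event.
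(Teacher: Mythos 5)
Your proof is correct and follows essentially the same route as the paper's: invoke the exact-recovery property of Lemma \ref{lemma.ravikumar}(ii) to conclude that each row of $\Delta_\Omega$ has at most $D_N$ nonzero entries, then bound the maximum absolute row sum by $D_N \vertii{\Delta_\Omega}_\infty$ and apply the element-wise rate from part (i). Your added remark that the sample-size condition $T \geq c D_N^2 \log N$ must be in force for both parts of Lemma \ref{lemma.ravikumar} to apply is a point the paper's own proof leaves implicit.
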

	
	\begin{proof}
	The result follows directly from Lemma \ref{lemma.ravikumar}. Specifically, the exact recovery result of $(ii)$ shows that $\widehat{\Omega}_{gl}$ at least remains the sparse structure of $\Omega$. That is, for the pair $(i,j)$ such that $\Omega=0$, then $\widehat{\Omega}_{gl}=0$. Therefore, we can conclude that, if there are at most $D_N$ nonzero entries per row in $\Omega$, then the maximum nonzero entries per row in $\Delta_\Omega=\widehat{\Omega}_{gl}-\Omega$ is at most $D_N$. 
	
	   Moreover, from $(i)$, we know that the largest entry-wise deviation of $\widehat{\Omega}_{gl}$ from $\Omega$ is at the rate of $\mathcal{O}_p(\sqrt{\frac{\log N}{T}})$. Therefore, by definition of the row norm $\vertiii{.}_\infty$, 
	   \[\vertiii{\widehat{\Omega}_{gl}-\Omega}_\infty \leq D_N \vertii{\widehat{\Omega}_{gl}-\Omega}_\infty = \mathcal{O}_p(D_N\sqrt{\frac{\log N}{T}}).\]
	\end{proof}

		In the following proof of Lemma \ref{lemma 4} and \ref{lemma 5}, we assume that $K=1$. We skip the general case in the proof because the proof of the general case is similar to that of $K=1$, but with more complicated notations. 
		
	\begin{lemma} \label{lemma 4}
		If Assumptions \ref{assumption1}, \ref{p.assumption2} and \ref{assumption 3} hold, then $\vertii{\widehat{B}_{NT}-B_{NT}}_\infty \leq \mathcal{O}_p \left(\frac{\sqrt{C_N} D_N \log N}{\sqrt{T}}\right)$.
	\end{lemma}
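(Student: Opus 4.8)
The plan is to reduce everything to quantities already controlled by Lemma \ref{lemma:sub-Gaussian vector} and Lemma \ref{row norm of Omega}, via a purely deterministic factorization. Since $B_{NT}$ and $\widehat{B}_{NT}$ differ only through $\Omega$ versus $\widehat{\Omega}_{gl}$, I would first write, with $\Delta_\Omega := \widehat{\Omega}_{gl}-\Omega$,
\[
\widehat{B}_{NT}-B_{NT} = \frac{1}{\sqrt{T}}\sum_{t=1}^T X_t \Delta_\Omega U_t.
\]
Under the normalization $K=1$ we have $X_t=\mathrm{diag}(X_{1t},\dots,X_{Nt})$, so the $i$-th coordinate of this vector is $\frac{1}{\sqrt{T}}\sum_t X_{it}\sum_{j}\Delta_{\Omega,ij}U_{jt}$. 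Interchanging the two finite sums rewrites it as $\sum_{j=1}^N \Delta_{\Omega,ij}\,G_{(i,j)}$, where $G_{(i,j)}=\frac{1}{\sqrt{T}}\sum_t X_{it}U_{jt}$ is precisely a coordinate of the vector $G$ defined before Lemma \ref{lemma:sub-Gaussian vector}.

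Next I would apply a H\"older-type bound coordinate-wise, pulling out the global maximum of the $G_{(i,j)}$:
\[
\left| [\widehat{B}_{NT}-B_{NT}]_i \right|
\leq \Big(\max_{i,j}|G_{(i,j)}|\Big)\sum_{j=1}^N |\Delta_{\Omega,ij}|
= \vertii{G}_\infty \sum_{j=1}^N |\Delta_{\Omega,ij}|.
\]
Taking the maximum over $i$ converts the remaining row sum into the maximum absolute row-sum norm, giving $\vertii{\widehat{B}_{NT}-B_{NT}}_\infty \leq \vertii{G}_\infty\,\vertiii{\Delta_\Omega}_\infty$. Both factors are then in hand: Lemma \ref{lemma:sub-Gaussian vector}(b) yields $\vertii{G}_\infty \leq \mathcal{O}_p(\sqrt{C_N\log N})$, while Lemma \ref{row norm of Omega} (which combines the exact-recovery sparsity of $\widehat{\Omega}_{gl}$ from Lemma \ref{lemma.ravikumar}(ii) with the element-wise rate of Lemma \ref{lemma.ravikumar}(i)) yields $\vertiii{\Delta_\Omega}_\infty \leq \mathcal{O}_p(D_N\sqrt{\log N/T})$. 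Multiplying, $\sqrt{C_N\log N}\cdot D_N\sqrt{\log N/T}=\sqrt{C_N}\,D_N\,\log N/\sqrt{T}$, which is exactly the claimed rate.

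The step requiring the most care is the reindexing in the first paragraph: one must verify that the diagonal structure of $X_t$ and the sparsity pattern of $\Delta_\Omega$ produce only scalar averages that are coordinates of $G$, so that the uniform sub-Gaussian control of Lemma \ref{lemma:sub-Gaussian vector} applies to all of them \emph{simultaneously} through $\vertii{G}_\infty$. All the genuine probabilistic content — the $\sqrt{\log N}$ price for maximizing over $\mathcal{O}(N^2)$ sub-Gaussian averages and the $\sqrt{C_N}$ scaling — is already absorbed into Lemma \ref{lemma:sub-Gaussian vector}, so the remaining work is the deterministic factorization above. The general case $K>1$ follows by the identical argument, replacing the diagonal $X_t$ by its block-diagonal form and inserting an additional finite sum over $k\leq K$, which affects only constants.
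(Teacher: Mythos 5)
Your proposal is correct and follows essentially the same route as the paper: the identical decomposition $\widehat{B}_{NT}-B_{NT}=\frac{1}{\sqrt{T}}\sum_t X_t\Delta_\Omega U_t$, the same coordinate-wise H\"older bound $\vertii{\widehat{B}_{NT}-B_{NT}}_\infty\leq \vertii{G}_\infty\,\vertiii{\Delta_\Omega}_\infty$, and the same two ingredients (Lemma \ref{lemma:sub-Gaussian vector}(b) and Lemma \ref{row norm of Omega}) to obtain the rate $\sqrt{C_N}D_N\log N/\sqrt{T}$. No gaps.
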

	
	\begin{proof}
		Let $\Delta_{\Omega}=\widehat{\Omega}_{gl}-\Omega$, we have,
		\[\widehat{B}_{NT}-B_{NT} =\frac{1}{\sqrt{T}}\sum_{t=1}^TX_t \Delta_{\Omega} U_t \]
		
		Let $K=1$, 
		\begin{equation*}
		\begin{split}
		\widehat{B}_{NT}-B_{NT}&=\frac{1}{\sqrt{T}}\sum_{t=1}^TX_t \Delta_{\Omega} U_t \\
		&=\frac{1}{\sqrt{T}}\sum_{t=1}^T \begin{bmatrix}
		X_{1t} &             &                   &      \\
		& X_{2t} &                    &      \\
		&             &    \ddots     &      \\
		&			&					& X_{Nt}  
		\end{bmatrix}  \begin{bmatrix}
		\Delta_{\Omega,11} & \Delta_{\Omega,12} & \dots & \Delta_{\Omega,1N} \\
		\Delta_{\Omega,21} & \Delta_{\Omega,22} & \dots & \Delta_{\Omega,2N} \\
		\vdots & \vdots & \ddots & \vdots \\
		\Delta_{\Omega,N1} & \Delta_{\Omega,N2} & \dots & \Delta_{\Omega,NN} 
		\end{bmatrix}\begin{bmatrix}
		U_{1t} \\
		U_{2t} \\
		\vdots \\
		U_{Nt}
		\end{bmatrix} \\
		&=\frac{1}{\sqrt{T}} \sum_{t=1}^T\begin{bmatrix}
		X_{1t}\Delta_{\Omega,11} & X_{1t}\Delta_{\Omega,12} & \dots & X_{1t}\Delta_{\Omega,1N} \\
		X_{2t}\Delta_{\Omega,21} & X_{2t}\Delta_{\Omega,22} & \dots & X_{2t}\Delta_{\Omega,2N} \\
		\vdots & \vdots & \ddots & \vdots \\
		X_{Nt}\Delta_{\Omega,N1} & X_{Nt}\Delta_{\Omega,N2} & \dots & X_{1t}\Delta_{\Omega,NN} 
		\end{bmatrix}
		\begin{bmatrix}
		U_{1t} \\
		U_{2t} \\
		\vdots \\
		U_{Nt}
		\end{bmatrix} \\
		& = \frac{1}{\sqrt{T}}\begin{bmatrix}
		\sum_{t=1}^TX_{1t}U_{1t}\Delta_{\Omega,11}+ \sum_{t=1}^TX_{1t}U_{2t}\Delta_{\Omega,12}+\cdots+ \sum_{t=1}^TX_{1t}U_{Nt}\Delta_{\Omega,1N} \\
		\vdots \\
		\sum_{t=1}^TX_{Nt}U_{1t}\Delta_{\Omega,N1}+ \sum_{t=1}^TX_{Nt}U_{2t}\Delta_{\Omega,N2}+\cdots+ \sum_{t=1}^TX_{Nt}U_{Nt}\Delta_{\Omega,NN} 
		\end{bmatrix}.
		\end{split} 
		\end{equation*}
		
		The $i^{th}$ element of $\widehat{B}_{NT}-B_{NT} \in \mathbb{R}^{KN \times 1}$ is:
		\begin{align*}
			(\widehat{B}_{NT}-B_{NT})_i 
			&=  \sum_{j=1}^N \frac{1}{\sqrt{T}} \sum_{t=1}^T X_{it} U_{jt} \Delta_{\Omega,ij} \\
			&\leq \left( \max_{1 \leq j \leq N} \left| \frac{1}{\sqrt{T}} \sum_{t=1}^T X_{it} U_{jt} \right| \right) \left( \max_{1 \leq i \leq N} \sum_{j=1}^N | \Delta_{\Omega,ij} | \right).
		\end{align*}
		Recall the definition $G = (\frac{1}{\sqrt{T}} \sum_{t=1}^T X_{it} U_{jt} : i,j = 1,...,N)$.
		Then by definition,
		\begin{align*}
		\| \widehat{B}_{NT}-B_{NT} \|_{\infty} 
		&= \max_{1 \leq i \leq N}  | ( \widehat{B}_{NT}-B_{NT})_i | \\
		&\leq \left( \max_{1 \leq i \leq N}\max_{1 \leq j \leq N} \left| \frac{1}{\sqrt{T}} \sum_{t=1}^T X_{it} U_{jt} \right| \right)
		\left(\max_{1 \leq i \leq N} \sum_{j=1}^N | \Delta_{\Omega,ij} | \right) \\
		&= \| G \|_{\infty} \vertiii{\Delta_\Omega}_\infty \\
		&\leq \mathcal{O}_p \left( \frac{\sqrt{C_N} D_N \log N}{\sqrt{T}} \right) .
		\end{align*}

%
		
	\end{proof}
	
	\begin{lemma} \label{lemma 5}
		If Assumptions \ref{assumption1}, \ref{p.assumption2} and \ref{assumption 3} hold, then $\vertii{B_{NT}}_\infty \leq \mathcal{O}_p(D_N\sqrt{C_N\log N}).$
	\end{lemma}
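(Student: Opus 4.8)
The plan is to mimic the decomposition used in the proof of Lemma \ref{lemma 4}, but with the true precision matrix $\Omega$ playing the role of $\Delta_\Omega$. First I would write out the $i$th coordinate of $B_{NT}$ explicitly. Taking $K=1$ so that $X_t=\mathrm{diag}(X_{1t},\ldots,X_{Nt})$, the $i$th entry of $X_t\Omega U_t$ involves only $X_{it}$, and hence
\[
(B_{NT})_i = \frac{1}{\sqrt{T}}\sum_{t=1}^T X_{it}\sum_{j=1}^N \Omega_{ij}U_{jt} = \sum_{j=1}^N \Omega_{ij}\left(\frac{1}{\sqrt{T}}\sum_{t=1}^T X_{it}U_{jt}\right),
\]
which exhibits each coordinate as a linear combination of entries of the vector $G$ defined just before Lemma \ref{lemma:sub-Gaussian vector}, weighted by the corresponding row of $\Omega$.

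Second I would apply an $\ell^\infty$--$\ell^1$ Hölder-type bound to separate the magnitude of $G$ from the row mass of $\Omega$:
\[
\vertii{B_{NT}}_\infty = \max_{1\le i\le N}|(B_{NT})_i| \le \left(\max_{1\le i,j\le N}\left|\frac{1}{\sqrt{T}}\sum_{t=1}^T X_{it}U_{jt}\right|\right)\left(\max_{1\le i\le N}\sum_{j=1}^N|\Omega_{ij}|\right) = \vertii{G}_\infty \,\vertiii{\Omega}_\infty.
\]
This is exactly the analogue of the intermediate inequality $\vertii{\widehat{B}_{NT}-B_{NT}}_\infty \le \vertii{G}_\infty\,\vertiii{\Delta_\Omega}_\infty$ established in the proof of Lemma \ref{lemma 4}.

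Third I would control the two factors separately using the already-stated results. For the row norm, Assumption \ref{p.assumption2}(i) bounds the number of nonzero entries per row of $\Omega$ by $D_N$, while Assumption \ref{assumption 3}(i) gives $\vertii{\Omega}_\infty=\mathcal{O}(1)$; combining these yields $\vertiii{\Omega}_\infty \le D_N\vertii{\Omega}_\infty = \mathcal{O}(D_N)$. For the stochastic factor, Lemma \ref{lemma:sub-Gaussian vector}(b) gives $\vertii{G}_\infty = \mathcal{O}_p(\sqrt{C_N\log N})$. Multiplying the two bounds delivers $\vertii{B_{NT}}_\infty \le \mathcal{O}_p(D_N\sqrt{C_N\log N})$, which is the claim.

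Because every ingredient is a direct application of a previously established lemma, there is no real analytic obstacle here; the only points requiring care are bookkeeping ones. Specifically, one must verify that the coordinatewise factorization is legitimate (which relies on $X_t$ being diagonal, so that coordinate $i$ of $B_{NT}$ depends on the regressor only through $X_{it}$), and that the row-sum estimate genuinely invokes \emph{both} the sparsity count $D_N$ and the elementwise magnitude bound $\vertii{\Omega}_\infty=\mathcal{O}(1)$, rather than either alone. The general $K>1$ case goes through identically with heavier indexing, as the authors note preceding Lemma \ref{lemma 4}.
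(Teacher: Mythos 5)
Your proposal is correct and follows essentially the same route as the paper: coordinatewise expansion of $B_{NT}$, the bound $\vertii{B_{NT}}_\infty \le \vertii{G}_\infty\,\vertiii{\Omega}_\infty$, then $\vertiii{\Omega}_\infty \le D_N\vertii{\Omega}_\infty = \mathcal{O}(D_N)$ from Assumptions \ref{p.assumption2}(i) and \ref{assumption 3}(i) together with $\vertii{G}_\infty = \mathcal{O}_p(\sqrt{C_N\log N})$ from Lemma \ref{lemma:sub-Gaussian vector}(b). No gaps.
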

	\begin{proof}
		
		Similar as proof in Lemma \ref{lemma 4}, for $i'th$ element in $B_{NT}$ $(i=1,2,...,KN)$,
		\begin{equation*}
		\begin{split}
		|B_{NT,i}| &=\frac{1}{\sqrt{T}}|e_i' \sum_{t=1}^T X_t \Omega U_t |\\
		&=\frac{1}{\sqrt{T}}\sum_{j=1}^N \sum_{t=1}^T X_{it}U_{jt} \Omega_{ij} \\
		& \leq \left (\underset{1\leq j \leq N} {\max}\left | \frac{1}{\sqrt{T}} \sum_{t=1}^T X_{it} U_{jt} \right |\right ) \left (\sum_{j=1}^N |\Omega_{ij}| \right )
		\end{split}
		\end{equation*}
		Thus,
		\begin{equation*}
		\begin{split}
		\vertii{B_{NT}}_\infty&=\underset{i=1,2,...,N}{\mathrm{max}}|B_{NT,i}| \\
		& \leq \vertiii{\Omega}_\infty \vertii{G}_\infty \\
		& \leq D_N \vertii{\Omega}_\infty \mathcal{O}_p(\sqrt{C_N\log N}) \\
		& =\mathcal{O}_p(D_N \sqrt{C_N\log N}),
		\end{split}
		\end{equation*}
		where the inequality if from the sparsity structure of $\Omega$ from Assumption \ref{p.assumption2} and $\vertii{\Omega}_\infty=\mathcal{O}(1)$ from Assumption \ref{assumption 3}. 
	\end{proof}
	
	\subsection*{Proof of Proposition \ref{prop.uniform.rate} }
	
	\begin{proof}
	
		\begin{equation}
		\begin{split}
			\sqrt{T}\vertii{(\widehat{\beta}_{Fglasso}-\widehat{\beta}_{GLS})}_\infty &= \vertii{\widehat{A}_{NT}^{-1}\widehat{B}_{NT}-A_{NT}^{-1}B_{NT}}_\infty \\
			& = \vertii{\widehat{A}_{NT}^{-1}(A_{NT}-\widehat{A}_{NT})A^{-1}_{NT}\widehat{B}_{NT} + A^{-1}_{NT}(\widehat{B}_{NT}-B_{NT})}_\infty. 
			\end{split}
			\label{6.2.1}
		\end{equation}
		
		Apply Lemma \ref{lemma: useful facts} $(iii)$, $(v)$, continue (\ref{6.2.1}), we have:
		\begin{equation}
		\begin{split}
		\sqrt{T}\vertii{(\widehat{\beta}_{Fglasso}-\widehat{\beta}_{GLS})}_\infty & \leq \vertiii{\widehat{A}_{NT}^{-1}(A_{NT}-\widehat{A}_{NT})A^{-1}_{NT}}_\infty \vertii{\widehat{B}_{NT}}_\infty + \vertiii{A^{-1}_{NT}}_\infty\vertii{(\widehat{B}_{NT}-B_{NT})}_\infty \\
		& \leq  \vertiii{\widehat{A}_{NT}^{-1}(A_{NT}-\widehat{A}_{NT})A^{-1}_{NT}}_\infty \big(\vertii{\widehat{B}_{NT}-B_{NT}}_\infty+\vertii{B_{NT}}_\infty\big) \\
		&+\vertiii{A^{-1}_{NT}}_\infty\vertii{(\widehat{B}_{NT}-B_{NT})}_\infty.
		\end{split}
		\label{6.2.2}
		\end{equation}
		
		For the first term on the RHS, we first use Lemma \ref{lemma: useful facts} $(ii)$,
		\begin{equation*}
		\vertiii{\widehat{A}_{NT}^{-1}(A_{NT}-\widehat{A}_{NT})A^{-1}_{NT}}_\infty \leq \sqrt{NK} \vertii{ \widehat{A}_{NT}^{-1}(A_{NT}-\widehat{A}_{NT})A^{-1}_{NT} }_{op},
		\end{equation*}
		then apply Lemma \ref{lemma: useful facts} $(i)$, $(iv)$ and obtain, 
		\begin{align*}
		\vertii{\widehat{A}_{NT}^{-1}(A_{NT}-\widehat{A}_{NT})A^{-1}_{NT} }_{op}& \leq \vertii{ \widehat{A}_{NT}^{-1} }_{op} \vertii{A_{NT}-\widehat{A}_{NT}}_{op} \vertii{A^{-1}_{NT}}_{op} \\
		& \leq \vertii{ \widehat{A}_{NT}^{-1} }_{op} \vertiii{A_{NT}-\widehat{A}_{NT}}_\infty \vertii{A^{-1}_{NT}}_{op}.
		\end{align*}
	
	Therefore, (\ref{6.2.2}) can be written as:
	\begin{equation}
	\begin{split}
	\sqrt{T}\vertii{(\widehat{\beta}_{Fglasso}-\widehat{\beta}_{GLS})}_\infty & \leq \sqrt{NK} \vertii{ \widehat{A}_{NT}^{-1} }_{op} \vertiii{A_{NT}-\widehat{A}_{NT}}_\infty \vertii{A^{-1}_{NT}}_{op} \big(\vertii{\widehat{B}_{NT}-B_{NT}}_\infty+\vertii{B_{NT}}_\infty\big)\\
		& + \vertiii{A^{-1}_{NT}}_\infty\vertii{(\widehat{B}_{NT}-B_{NT})}_\infty \\
	\end{split}
	\label{6.2.3}
	\end{equation}
	
	Apply results from Lemma \ref{lemma 2.5}, \ref{lemma 3}, \ref{lemma 4} and \ref{lemma 5}, continue (\ref{6.2.3}), for constant $c>0$,
	\begin{align*}
	\sqrt{T}\vertii{(\widehat{\beta}_{Fglasso}-\widehat{\beta}_{GLS})}_\infty & \leq  \mathcal{O}_p \Big (\frac{(C_N)^\frac{3}{2} D_N^{2} \sqrt{N} \log N}{\sqrt{T}} \Big ) \frac{\mathcal{O}_p(\sqrt{\log N/T}+1)}{c+\mathcal{O}_p(C_N D_N \sqrt{\log N/T})}\\
	&+\mathcal{O}_p \Big (\frac{C_N D_N \sqrt{N} \log N}{\sqrt{T}} \Big ) \\
		& \leq \mathcal{O}_p \Big (\frac{(C_N)^\frac{3}{2} D_N^{2} \sqrt{N} \log N}{\sqrt{T}} \Big ),
		\end{align*}
	 the last inequality holds because the minimum requirement of $T>cD_N^2 \log N$.
	\end{proof}
	
	\subsection*{Proof of Proposition \ref{prop.asy.gls} }
	
	\begin{proof}
		For any bounded vector $b\in \mathbb{R}^{NK\times 1}$ such that $b'b=1$, denote scalar $\widetilde{x}_t$ as:
		\[\widetilde{x}_t:=b'\big(\frac{1}{T}\sum_{t=1}^TX_t\Omega X_t'\big)^{-1}X_t\Omega U_t,\]
		where $\mathbb{E}(\widetilde{x}_t)=0$ and $V(\widetilde{x}_t)=b'\big(\frac{1}{T}\sum_{t=1}^TX_t\Omega X_t'\big)^{-1}b \leq \frac{1}{s_{min}(A_{NT})}< \infty$.
		Then,
		\begin{equation}
		b'\sqrt{T}(\widehat{\beta}_{GLS}-\beta)=b'\big(\frac{1}{T}\sum_{t=1}^TX_t\Omega X_t'\big)^{-1} (\frac{1}{\sqrt{T}}\sum_{t=1}^TX_t\Omega U_t)=\frac{1}{\sqrt{T}}\sum_{t=1}^T \widetilde{x}_t 
		\end{equation}
		Apply CLT, we have:
		\begin{equation}
		b'\sqrt{T}(\widehat{\beta}_{GLS}-\beta) \Rightarrow N\Big(0,b'E( X_t'\Omega X_t)^{-1} b\Big)
		\end{equation}
	\end{proof}

	\newpage
	
	\section{Appendix: Additional Simulation Results}

	\renewcommand{\theequation}{B.\arabic{equation}}  		\setcounter{equation}{0}
	\renewcommand{\thetheorem}{B.\arabic{theorem}}  \setcounter{theorem}{0}
	\renewcommand{\thelemma}{A.\arabic{lemma}}  \setcounter{lemma}{0}
	\renewcommand{\thefigure}{B.\arabic{figure}}  \setcounter{figure}{0}
	\renewcommand{\thetable}{B.\arabic{table}}  \setcounter{table}{0}

		In this section, we use the same four designs as described in section 4. Let $T=\{50,100,200\}$ and control $N/T$ ratio to be $\{0.1,0.5,1,1.5,2\}$. In each table, as before, we report $l_\infty \times 100$, $RMSE\times 100$, as well as the number of times the FGLasso estimator outperforms the FGLS. We also show the penalty parameter $\lambda_n$ chosen from the 5-fold cross validation.
		
	As shown in Table \ref{tab2} and \ref{tab3}, when $\Omega$ is exactly sparse, the $FGLasso$ performs better when $N$ is relatively large compared to $T$, i.e., $N/T \geq 0.5$. Even when $N$ is twice larger than $T$, the FGLasso estimator still performs closely to the GLS in both the $l_\infty$ and the RMSE standards. 
	
	Table \ref{tab4} shows the results when $\Omega$ is generated from the AR(1) design. When $N$ is small, $\Omega$ is dense, and it performs worse than the FGLS estimator. For example, when $\{T,N\}=\{50,5\}$, the number of times that the FGLasso estimator outperforms the FGLS estimator is only around 35 out of 100. However, when $N$ is relatively large, the FGLasso estimator beats the FGLS estimator almost every time and performs closely to the GLS estimator as $N,T$ increases.
	
	Under the dense design where the true precision matrix is dense and the true covariance matrix is sparse, the efficiency gain of the GLS estimator is minor. Both the FGLS and FGLasso estimators behave poorly. Notice here when $N/T$ is relatively small, the FGLasso estimator still outperforms the FGLS estimator, and in some case, it is slightly better than the OLS estimator. 
	
	In general, when the underlying true precision matrix has a certain sparse structure, the FGLasso estimator beats the FGLS estimator and becomes closer to the GLS estimator. When $\Omega$ is dense, both the FGLasso estimator and the FGLS estimator perform poorly.
	 
\begin{table}[htbp]
\centering
\begin{threeparttable}
\resizebox{\columnwidth}{!}{
\begin{tabular}{|cccccccccccccccc|}
\hline
\multicolumn{1}{|c|}{} & \multicolumn{3}{c|}{N/T=0.1} & \multicolumn{3}{c|}{N/T=0.5} & \multicolumn{3}{c|}{N/T=1} & \multicolumn{3}{c|}{N/T=1.5} & \multicolumn{3}{c|}{N/T=2} \\ \cline{2-16} 
\multicolumn{1}{|c|}{T} & 50 & 100 & \multicolumn{1}{c|}{200} & 50 & 100 & \multicolumn{1}{c|}{200} & 50 & 100 & \multicolumn{1}{c|}{200} & 50 & 100 & \multicolumn{1}{c|}{200} & 50 & 100 & 200 \\
\multicolumn{1}{|c|}{N} & 5 & 10 & \multicolumn{1}{c|}{20} & 25 & 50 & \multicolumn{1}{c|}{100} & 50 & 100 & \multicolumn{1}{c|}{200} & 75 & 150 & \multicolumn{1}{c|}{300} & 100 & 200 & 400 \\ \hline
\multicolumn{16}{|l|}{} \\
\multicolumn{16}{|c|}{Element-wise maximum norm $l_\infty \times 100$} \\
\multicolumn{1}{|c|}{OLS} & 31.92 & 29.35 & \multicolumn{1}{c|}{25.46} & 53.94 & 43.00 & \multicolumn{1}{c|}{34.12} & 63.02 & 48.79 & \multicolumn{1}{c|}{37.43} & 67.07 & 50.73 & \multicolumn{1}{c|}{38.53} & 69.42 & 52.51 & 39.94 \\
\multicolumn{1}{|c|}{} & \textit{(12.13)} & \textit{(8.28)} & \multicolumn{1}{c|}{\textit{(5.14)}} & \textit{(12.84)} & \textit{(7.17)} & \multicolumn{1}{c|}{\textit{(4.74)}} & \textit{(11.17)} & \textit{(6.57)} & \multicolumn{1}{c|}{\textit{(4.59)}} & \textit{(10.53)} & \textit{(6.38)} & \multicolumn{1}{c|}{\textit{(4.7)}} & \textit{(10.28)} & \textit{(6.79)} & \textit{(4.57)} \\
\multicolumn{1}{|c|}{GLS} & 23.52 & 19.37 & \multicolumn{1}{c|}{15.64} & 33.52 & 25.00 & \multicolumn{1}{c|}{19.57} & 37.82 & 28.71 & \multicolumn{1}{c|}{20.69} & 37.92 & 29.88 & \multicolumn{1}{c|}{22.42} & 40.80 & 30.98 & 22.78 \\
\multicolumn{1}{|c|}{} & \textit{(9.01)} & \textit{(5.17)} & \multicolumn{1}{c|}{\textit{(2.98)}} & \textit{(7.91)} & \textit{(4.76)} & \multicolumn{1}{c|}{\textit{(2.91)}} & \textit{(6.74)} & \textit{(4.80)} & \multicolumn{1}{c|}{\textit{(2.39)}} & \textit{(5.96)} & \textit{(4.59)} & \multicolumn{1}{c|}{\textit{(3.14)}} & \textit{(6.77)} & \textit{(3.81)} & \textit{(2.85)} \\
\multicolumn{1}{|c|}{FGLS} & 23.87 & 20.27 & \multicolumn{1}{c|}{16.29} & 43.38 & 33.99 & \multicolumn{1}{c|}{26.10} & 62.25 & 48.40 & \multicolumn{1}{c|}{37.34} &  &  & \multicolumn{1}{c|}{} &  &  &  \\
\multicolumn{1}{|c|}{} & \textit{(9.42)} & \textit{(5.14)} & \multicolumn{1}{c|}{\textit{(3.35)}} & \textit{(9.41)} & \textit{(7.13)} & \multicolumn{1}{c|}{\textit{(3.58)}} & \textit{(10.93)} & \textit{(6.56)} & \multicolumn{1}{c|}{\textit{(4.55)}} & \textit{} & \textit{} & \multicolumn{1}{c|}{\textit{}} & \textit{} & \textit{} & \textit{} \\
\multicolumn{1}{|c|}{FGLasso} & 24.37 & 20.76 & \multicolumn{1}{c|}{16.43} & 40.67 & 30.13 & \multicolumn{1}{c|}{21.82} & 48.39 & 34.89 & \multicolumn{1}{c|}{24} & 54.83 & 37.34 & \multicolumn{1}{c|}{36.56} & 57.72 & 40.14 & 27.63 \\
\multicolumn{1}{|c|}{} & (10.23) & \textit{(5.43)} & \multicolumn{1}{c|}{\textit{(3.17)}} & \textit{(8.98)} & \textit{(6.41)} & \multicolumn{1}{c|}{\textit{(3.23)}} & \textit{(9.14)} & \textit{(5.31)} & \multicolumn{1}{c|}{\textit{(2.77)}} & \textit{(9.18)} & \textit{(4.95)} & \multicolumn{1}{c|}{\textit{(3.5)}} & \textit{(9.28)} & \textit{(4.69)} & \textit{(2.97)} \\
\multicolumn{1}{|c|}{Percentage} & 52 & 47 & \multicolumn{1}{c|}{48} & 77 & 85 & \multicolumn{1}{c|}{95} & 97 & 100 & \multicolumn{1}{c|}{100} &  &  & \multicolumn{1}{c|}{} &  &  &  \\ \hline
\multicolumn{16}{|c|}{} \\
\multicolumn{16}{|c|}{$RMSE \times 100$} \\
\multicolumn{1}{|c|}{OLS} & 19.35 & 15.47 & \multicolumn{1}{c|}{11.66} & 23.25 & 17.18 & \multicolumn{1}{c|}{12.19} & 24.29 & 17.48 & \multicolumn{1}{c|}{12.22} & 24.79 & 17.39 & \multicolumn{1}{c|}{12.31} & 24.90 & 17.49 & 12.39 \\
\multicolumn{1}{|c|}{} & \textit{(7.17)} & \textit{(3.68)} & \multicolumn{1}{c|}{\textit{(1.64)}} & \textit{(3.72)} & \textit{(1.94)} & \multicolumn{1}{c|}{\textit{(0.84)}} & \textit{(2.58)} & \textit{(1.20)} & \multicolumn{1}{c|}{\textit{(0.53)}} & \textit{(2.02)} & \textit{(1.05)} & \multicolumn{1}{c|}{\textit{(0.49)}} & \textit{(1.82)} & \textit{(0.88)} & \textit{(0.46)} \\
\multicolumn{1}{|c|}{GLS} & 14.09 & 10.04 & \multicolumn{1}{c|}{7.06} & 14.23 & 10.11 & \multicolumn{1}{c|}{7.11} & 14.62 & 10.19 & \multicolumn{1}{c|}{7.05} & 14.22 & 10.08 & \multicolumn{1}{c|}{7.15} & 14.54 & 10.17 & 7.16 \\
\multicolumn{1}{|c|}{} & \textit{(4.81)} & \textit{(2.07)} & \multicolumn{1}{c|}{\textit{(1.07)}} & \textit{(2.24)} & \textit{(1.16)} & \multicolumn{1}{c|}{\textit{(0.54)}} & \textit{(1.59)} & \textit{(0.71)} & \multicolumn{1}{c|}{\textit{(0.37)}} & \textit{(1.15)} & \textit{(0.58)} & \multicolumn{1}{c|}{\textit{(0.29)}} & \textit{(1.07)} & \textit{(0.54)} & \textit{(0.25)} \\
\multicolumn{1}{|c|}{FGLS} & 14.53 & 10.56 & \multicolumn{1}{c|}{7.44} & 18.29 & 13.33 & \multicolumn{1}{c|}{9,34} & 24.00 & 17.36 & \multicolumn{1}{c|}{12.17} &  &  & \multicolumn{1}{c|}{} &  &  &  \\
\multicolumn{1}{|c|}{} & \textit{(5.14)} & \textit{(2.20)} & \multicolumn{1}{c|}{\textit{(1.11)}} & \textit{(3.01)} & \textit{(1.60)} & \multicolumn{1}{c|}{\textit{(0.70)}} & \textit{(2.56)} & \textit{(1.17)} & \multicolumn{1}{c|}{\textit{(0.53)}} & \textit{} & \textit{} & \multicolumn{1}{c|}{\textit{}} & \textit{} & \textit{} & \textit{} \\
\multicolumn{1}{|c|}{FGLasso} & 14.95 & 10.82 & \multicolumn{1}{c|}{7.44} & 17.29 & 11.87 & \multicolumn{1}{c|}{7.91} & 18.78 & 12.47 & \multicolumn{1}{c|}{8.11} & 19.60 & 12.66 & \multicolumn{1}{c|}{8.45} & 20.16 & 13.02 & 8.58 \\
\multicolumn{1}{|c|}{} & \textit{(5.78)} & \textit{(2.36)} & \multicolumn{1}{c|}{\textit{(1.11)}} & \textit{(3.01)} & \textit{(1.49)} & \multicolumn{1}{c|}{\textit{(0.63)}} & \textit{(2.02)} & \textit{(0.83)} & \multicolumn{1}{c|}{\textit{(0.39)}} & \textit{(1.70)} & \textit{(0.81)} & \multicolumn{1}{c|}{\textit{(0.33)}} & \textit{(1.41)} & \textit{(0.70)} & \textit{(0.34)} \\
\multicolumn{1}{|c|}{Percentage} & 49 & 46 & \multicolumn{1}{c|}{46} & 88 & 97 & \multicolumn{1}{c|}{100} & 100 & 100 & \multicolumn{1}{c|}{100} &  &  & \multicolumn{1}{c|}{} &  &  &  \\ \hline
\multicolumn{1}{|l}{} & \multicolumn{1}{l}{} & \multicolumn{1}{l}{} & \multicolumn{1}{l}{} & \multicolumn{1}{l}{} & \multicolumn{1}{l}{} & \multicolumn{1}{l}{} & \multicolumn{1}{l}{} & \multicolumn{1}{l}{} & \multicolumn{1}{l}{} & \multicolumn{1}{l}{} & \multicolumn{1}{l}{} & \multicolumn{1}{l}{} & \multicolumn{1}{l}{} & \multicolumn{1}{l}{} & \multicolumn{1}{l|}{} \\
\multicolumn{16}{|c|}{Cross Validation} \\
\multicolumn{1}{|c|}{$\lambda_n \times 100$} & 7.01 & 6.29 & \multicolumn{1}{c|}{5.90} & 9.46 & 8.96 & \multicolumn{1}{c|}{8.27} & 15.01 & 15.09 & \multicolumn{1}{c|}{13.32} & 20.48 & 18.79 & \multicolumn{1}{c|}{15.60} & 24.22 & 22.18 & 18.79 \\
\multicolumn{1}{|c|}{} & \textit{(12.51)} & \textit{(10.22)} & \multicolumn{1}{c|}{\textit{(9.01)}} & \textit{(6.34)} & \textit{(3.36)} & \multicolumn{1}{c|}{\textit{(3.05)}} & \textit{(5.78)} & \textit{(4.44)} & \multicolumn{1}{c|}{\textit{(2.17)}} & \textit{(5.76)} & \textit{(3.47)} & \multicolumn{1}{c|}{\textit{(3.18)}} & \textit{(5.38)} & \textit{(4.53)} & \textit{(2.87)} \\ \hline 
\end{tabular}}
\caption{Band Graph}
\label{tab2}
\begin{tablenotes}[flushleft]
\item \floatfoot{Notes: This table reports $\|\widehat{\beta}_{OLS}-\beta\|$, $\|\widehat{\beta}_{GLS}-\beta\|$, $\|\widehat{\beta}_{FGLS}-\beta\|$, $\|\widehat{\beta}_{FGLasso}-\beta\|$ by $l_\infty (\times 100)$, RMSE$(\times 100)$ when $\Omega$ is generated from the band structure, see details in section 4.1. The percentage means the number of times $\|\widehat{\beta}_{FGLasso}-\beta\|\leq \|\widehat{\beta}_{FGLS}-\beta\|$ out of 100 simulations. $\lambda_n$ is the penalty parameter in (\ref{e.gl}) chosen by the 5-fold cross validation. The value in the parenthesis is the corresponding standard deviation$(\times 100)$. All the reported results here are based on 100 replications.}
\end{tablenotes}
\end{threeparttable}
\end{table}

\begin{table}[htbp]
\centering
\begin{threeparttable}
\resizebox{\columnwidth}{!}{
\begin{tabular}{|cccccccccccccccc|}
\hline
\multicolumn{1}{|c|}{} & \multicolumn{3}{c|}{N/T=0.1} & \multicolumn{3}{c|}{N/T=0.5} & \multicolumn{3}{c|}{N/T=1} & \multicolumn{3}{c|}{N/T=1.5} & \multicolumn{3}{c|}{N/T=2} \\ \cline{2-16} 
\multicolumn{1}{|c|}{T} & 50 & 100 & \multicolumn{1}{c|}{200} & 50 & 100 & \multicolumn{1}{c|}{200} & 50 & 100 & \multicolumn{1}{c|}{200} & 50 & 100 & \multicolumn{1}{c|}{200} & 50 & 100 & 200 \\
\multicolumn{1}{|c|}{N} & 5 & 10 & \multicolumn{1}{c|}{20} & 25 & 50 & \multicolumn{1}{c|}{100} & 50 & 100 & \multicolumn{1}{c|}{200} & 75 & 150 & \multicolumn{1}{c|}{300} & 100 & 200 & 400 \\ \hline
\multicolumn{16}{|c|}{} \\
\multicolumn{16}{|c|}{Element-wise maximum norm $l_\infty \times 100$} \\
\multicolumn{1}{|c|}{OLS} & 23.77 & 20.29 & \multicolumn{1}{c|}{17.40} & 39.65 & 32.86 & \multicolumn{1}{c|}{26.17} & 45.07 & 38.21 & \multicolumn{1}{c|}{29.62} & 52.42 & 40.60 & \multicolumn{1}{c|}{31.05} & 55.88 & 42.64 & 34.50 \\
\multicolumn{1}{|c|}{} & \textit{(10.31)} & \textit{(5.32)} & \multicolumn{1}{c|}{\textit{(4.20)}} & \textit{(8.03)} & \textit{(5.16)} & \multicolumn{1}{c|}{\textit{(3.92)}} & \textit{(8.07)} & \textit{(6.54)} & \multicolumn{1}{c|}{\textit{(3.84)}} & \textit{(9.57)} & \textit{(6.13)} & \multicolumn{1}{c|}{\textit{(3.97)}} & \textit{(9.80)} & \textit{(6.56)} & \textit{(4.01)} \\
\multicolumn{1}{|c|}{GLS} & 21.98 & 18.37 & \multicolumn{1}{c|}{14.60} & 34.12 & 25.42 & \multicolumn{1}{c|}{19.59} & 35.22 & 29.01 & \multicolumn{1}{c|}{20.62} & 40.63 & 29.64 & \multicolumn{1}{c|}{21.74} & 41.22 & 30.50 & 22.67 \\
\multicolumn{1}{|c|}{} & \textit{(9.16)} & \textit{(5.06)} & \multicolumn{1}{c|}{\textit{(3.53)}} & \textit{(7.35)} & \textit{(4.07)} & \multicolumn{1}{c|}{\textit{(2.76)}} & \textit{(5.77)} & \textit{(4.82)} & \multicolumn{1}{c|}{\textit{(2.53)}} & \textit{(7.29)} & \textit{(4.22)} & \multicolumn{1}{c|}{\textit{(2.56)}} & \textit{(6.43)} & \textit{(4.69)} & \textit{(2.63)} \\
\multicolumn{1}{|c|}{FGLS} & 23.14 & 18.71 & \multicolumn{1}{c|}{15.26} & 39.14 & 29.76 & \multicolumn{1}{c|}{23.39} & 45.36 & 38.42 & \multicolumn{1}{c|}{29.53} &  &  & \multicolumn{1}{c|}{} &  &  &  \\
\multicolumn{1}{|c|}{} & \textit{(10.09)} & \textit{(5.14)} & \multicolumn{1}{c|}{\textit{(3.61)}} & \textit{(8.36)} & \textit{(4.88)} & \multicolumn{1}{c|}{\textit{(3.54)}} & \textit{(8.06)} & \textit{(6.48)} & \multicolumn{1}{c|}{\textit{(4.01)}} & \textit{} & \textit{} & \multicolumn{1}{c|}{\textit{}} & \textit{} & \textit{} & \textit{} \\
\multicolumn{1}{|c|}{FGLasso} & 22.72 & 18.67 & \multicolumn{1}{c|}{14.93} & 37.11 & 26.94 & \multicolumn{1}{c|}{20.31} & 38.92 & 31.21 & \multicolumn{1}{c|}{21.36} & 44.73 & 32.34 & \multicolumn{1}{c|}{22.74} & 47.54 & 32.98 & 23.80 \\
\multicolumn{1}{|c|}{} & \textit{(10.23)} & \textit{(4.80)} & \multicolumn{1}{c|}{\textit{(3.62)}} & \textit{(7.28)} & \textit{(4.68)} & \multicolumn{1}{c|}{\textit{(3.16)}} & \textit{(7.10)} & \textit{(4.69)} & \multicolumn{1}{c|}{\textit{(2.50)}} & \textit{(8.36)} & \textit{(4.79)} & \multicolumn{1}{c|}{\textit{(2.66)}} & \textit{(7.15)} & \textit{(4.77)} & \textit{(2.95)} \\
\multicolumn{1}{|c|}{Percentage} & 55 & 56 & \multicolumn{1}{c|}{55} & 65 & 78 & \multicolumn{1}{c|}{90} & 88 & 90 & \multicolumn{1}{c|}{100} &  &  & \multicolumn{1}{c|}{} &  &  &  \\ \hline
 &  &  &  &  &  &  &  &  &  &  &  &  &  &  &  \\
\multicolumn{16}{|c|}{$RMSE \times 100$} \\
\multicolumn{1}{|c|}{OLS} & 15.16 & 11.01 & \multicolumn{1}{c|}{8.26} & 17.27 & 12.81 & \multicolumn{1}{c|}{9.36} & 17.88 & 13.35 & \multicolumn{1}{c|}{9.66} & 18.59 & 13.61 & \multicolumn{1}{c|}{9.92} & 19.13 & 13.89 & 10.35 \\
\multicolumn{1}{|c|}{} & \textit{(6.25)} & \textit{(2.63)} & \multicolumn{1}{c|}{\textit{(1.45)}} & \textit{(2.55)} & \textit{(1.17)} & \multicolumn{1}{c|}{\textit{(0.68)}} & \textit{(1.70)} & \textit{(0.97)} & \multicolumn{1}{c|}{\textit{(0.51)}} & \textit{(1.56)} & \textit{(0.91)} & \multicolumn{1}{c|}{\textit{(0.44)}} & \textit{(1.49)} & \textit{(0.72)} & \textit{(0.50)} \\
\multicolumn{1}{|c|}{GLS} & 14.07 & 9.84 & \multicolumn{1}{c|}{6.90} & 14.68 & 10.02 & \multicolumn{1}{c|}{7.11} & 14.23 & 10.20 & \multicolumn{1}{c|}{7.07} & 14.41 & 10.07 & \multicolumn{1}{c|}{7.06} & 14.57 & 10.07 & 7.15 \\
\multicolumn{1}{|c|}{} & \textit{(5.51)} & \textit{(2.41)} & \multicolumn{1}{c|}{\textit{(1.16)}} & \textit{(1.89)} & \textit{(0.88)} & \multicolumn{1}{c|}{\textit{(0.51)}} & \textit{(1.48)} & \textit{(0.68)} & \multicolumn{1}{c|}{\textit{(0.43)}} & \textit{(1.22)} & \textit{(0.64)} & \multicolumn{1}{c|}{\textit{(0.28)}} & \textit{(1.01)} & \textit{(0.58)} & \textit{(0.29)} \\
\multicolumn{1}{|c|}{FGLS} & 14.56 & 10.18 & \multicolumn{1}{c|}{7.19} & 16.97 & 11.88 & \multicolumn{1}{c|}{8.48} & 17.86 & 13.33 & \multicolumn{1}{c|}{9.60} &  &  & \multicolumn{1}{c|}{} &  &  &  \\
\multicolumn{1}{|c|}{} & \textit{(5.99)} & \textit{(2.53)} & \multicolumn{1}{c|}{\textit{(1.25)}} & \textit{(2.52)} & \textit{(1.12)} & \multicolumn{1}{c|}{\textit{(0.65)}} & \textit{(1.72)} & \textit{(0.97)} & \multicolumn{1}{c|}{\textit{(0.50)}} & \textit{} & \textit{} & \multicolumn{1}{c|}{\textit{}} & \textit{} & \textit{} & \textit{} \\
\multicolumn{1}{|c|}{FGLasso} & 14.38 & 10.14 & \multicolumn{1}{c|}{7.04} & 15.93 & 10.71 & \multicolumn{1}{c|}{7.40} & 15.69 & 10.98 & \multicolumn{1}{c|}{7.36} & 16.61 & 10.85 & \multicolumn{1}{c|}{7.38} & 16.49 & 10.91 & 7.79 \\
\multicolumn{1}{|c|}{} & \textit{(6.11)} & \textit{(2.42)} & \multicolumn{1}{c|}{\textit{(1.22)}} & \textit{(2.26)} & \textit{(1.02)} & \multicolumn{1}{c|}{\textit{(0.57)}} & \textit{(1.71)} & \textit{(0.71)} & \multicolumn{1}{c|}{\textit{(0.46)}} & \textit{(1.47)} & \textit{(0.70)} & \multicolumn{1}{c|}{\textit{(0.30)}} & \textit{(1.17)} & \textit{(0.62)} & \textit{(0.31)} \\
\multicolumn{1}{|c|}{Percentage} & 56 & 57 & \multicolumn{1}{c|}{68} & 80 & 95 & \multicolumn{1}{c|}{100} & 98 & 100 & \multicolumn{1}{c|}{100} &  &  & \multicolumn{1}{c|}{} &  &  &  \\ \hline
\multicolumn{1}{|l}{} & \multicolumn{1}{l}{} & \multicolumn{1}{l}{} & \multicolumn{1}{l}{} & \multicolumn{1}{l}{} & \multicolumn{1}{l}{} & \multicolumn{1}{l}{} & \multicolumn{1}{l}{} & \multicolumn{1}{l}{} & \multicolumn{1}{l}{} & \multicolumn{1}{l}{} & \multicolumn{1}{l}{} & \multicolumn{1}{l}{} & \multicolumn{1}{l}{} & \multicolumn{1}{l}{} & \multicolumn{1}{l|}{} \\
\multicolumn{16}{|c|}{Cross Validation} \\
\multicolumn{1}{|c|}{$\lambda_n \times 100$} & 9.53 & 6.11 & \multicolumn{1}{c|}{6.60} & 13.36 & 11.24 & \multicolumn{1}{c|}{11.66} & 17.82 & 15.79 & \multicolumn{1}{c|}{14.71} & 21.05 & 19.45 & \multicolumn{1}{c|}{16.56} & 23.38 & 21.31 & 19.28 \\
\multicolumn{1}{|c|}{} & \textit{(10.15)} & \textit{(7.48)} & \multicolumn{1}{c|}{\textit{(6.59)}} & \textit{(9.6)} & \textit{(4.58)} & \multicolumn{1}{c|}{\textit{(3.51)}} & \textit{(6.94)} & \textit{(3.83)} & \multicolumn{1}{c|}{\textit{(2.22)}} & \textit{(5.90)} & \textit{(2.97)} & \multicolumn{1}{c|}{\textit{(1.69)}} & \textit{(5.26)} & \textit{(2.97)} & \textit{(1.87)} \\ \hline 
\end{tabular}}
\caption{Four Nearest Neighbor Lattice}
\begin{tablenotes}[flushleft]
\item \floatfoot{Notes: This table reports $\|\widehat{\beta}_{OLS}-\beta\|$, $\|\widehat{\beta}_{GLS}-\beta\|$, $\|\widehat{\beta}_{FGLS}-\beta\|$, $\|\widehat{\beta}_{FGLasso}-\beta\|$ by $l_\infty (\times 100)$, RMSE$(\times 100)$ when $\Omega$ is generated from four nearest neighbor lattice structure, see details in section 4.1. Percentage means the number of times $\|\widehat{\beta}_{FGLasso}-\beta\|\leq \|\widehat{\beta}_{FGLS}-\beta\|$ out of 100 simulations. $\lambda_n$ is the penalty parameter in (\ref{e.gl}) chosen by 5-fold cross validation. Values in parenthesis are the corresponding standard deviation$(\times 100)$. All results reported here are average of 100 replications.}
\label{tab3}
\end{tablenotes}
\end{threeparttable}
\end{table}

\begin{table}[htbp]
\centering
\begin{threeparttable}
\resizebox{\columnwidth}{!}{
\begin{tabular}{|cccccccccccccccc|}
\hline 
\multicolumn{1}{|c|}{} & \multicolumn{3}{c|}{N/T=0.1} & \multicolumn{3}{c|}{N/T=0.5} & \multicolumn{3}{c|}{N/T=1} & \multicolumn{3}{c|}{N/T=1.5} & \multicolumn{3}{c|}{N/T=2} \\ \cline{2-16} 
\multicolumn{1}{|c|}{T} & 50 & 100 & \multicolumn{1}{c|}{200} & 50 & 100 & \multicolumn{1}{c|}{200} & 50 & 100 & \multicolumn{1}{c|}{200} & 50 & 100 & \multicolumn{1}{c|}{200} & 50 & 100 & 200 \\
\multicolumn{1}{|c|}{N} & 5 & 10 & \multicolumn{1}{c|}{20} & 25 & 50 & \multicolumn{1}{c|}{100} & 50 & 100 & \multicolumn{1}{c|}{200} & 75 & 150 & \multicolumn{1}{c|}{300} & 100 & 200 & 400 \\ \hline
\multicolumn{16}{|c|}{} \\
\multicolumn{16}{|c|}{Element-wise maximum norm $l_\infty \times 100$} \\
\multicolumn{1}{|c|}{OLS} & 33.67 & 28.34 & \multicolumn{1}{c|}{23.37} & 47.08 & 36.72 & \multicolumn{1}{c|}{28.63} & 53.45 & 40.89 & \multicolumn{1}{c|}{30.84} & 67.07 & 42.16 & \multicolumn{1}{c|}{32.27} & 59.74 & 44.27 & 33.06 \\
\multicolumn{1}{|c|}{} & \textit{(12.17)} & \textit{(8.4)} & \multicolumn{1}{c|}{\textit{(5.24)}} & \textit{(9.66)} & \textit{(6.21)} & \multicolumn{1}{c|}{\textit{(4.14)}} & \textit{(9.34)} & \textit{(6.15)} & \multicolumn{1}{c|}{\textit{(4.04)}} & \textit{(10.53)} & \textit{(5.69)} & \multicolumn{1}{c|}{\textit{(3.90)}} & \textit{(9.80)} & \textit{(5.30)} & \textit{(4.00)} \\
\multicolumn{1}{|c|}{GLS} & 25.10 & 19.27 & \multicolumn{1}{c|}{15.65} & 33.77 & 25.38 & \multicolumn{1}{c|}{19.62} & 37.44 & 28.73 & \multicolumn{1}{c|}{20.83} & 37.92 & 29.95 & \multicolumn{1}{c|}{22.22} & 40.52 & 31.42 & 22.92 \\
\multicolumn{1}{|c|}{} & \textit{(8.23)} & \textit{(4.73)} & \multicolumn{1}{c|}{\textit{(3.62)}} & \textit{(7.04)} & \textit{(4.74)} & \multicolumn{1}{c|}{\textit{(2.75)}} & \textit{(7.35)} & \textit{(4.50)} & \multicolumn{1}{c|}{\textit{(2.47)}} & \textit{(5.96)} & \textit{(4.54)} & \multicolumn{1}{c|}{\textit{(2.88)}} & \textit{(6.43)} & \textit{(3.81)} & \textit{(2.82)} \\
\multicolumn{1}{|c|}{FGLS} & 26.09 & 20.49 & \multicolumn{1}{c|}{16.78} & 41.55 & 32.54 & \multicolumn{1}{c|}{24.04} & 53.29 & 40.76 & \multicolumn{1}{c|}{30.78} &  &  & \multicolumn{1}{c|}{} &  &  &  \\
\multicolumn{1}{|c|}{} & \textit{(9.83)} & \textit{(5.68)} & \multicolumn{1}{c|}{\textit{(4.01)}} & \textit{(8.58)} & \textit{(6.61)} & \multicolumn{1}{c|}{\textit{(3.66)}} & \textit{(9.33)} & \textit{(6.10)} & \multicolumn{1}{c|}{\textit{(4.04)}} & \textit{} & \textit{} & \multicolumn{1}{c|}{\textit{}} & \textit{} & \textit{} & \textit{} \\
\multicolumn{1}{|c|}{FGLasso} & 27.11 & 21.02 & \multicolumn{1}{c|}{17.29} & 39.89 & 31.17 & \multicolumn{1}{c|}{22.73} & 47.35 & 35.25 & \multicolumn{1}{c|}{24.72} & 54.83 & 36.44 & \multicolumn{1}{c|}{26.95} & 53.98 & 38.84 & 28.19 \\
\multicolumn{1}{|c|}{} & \textit{(10.66)} & \textit{(5.82)} & \multicolumn{1}{c|}{\textit{(4.27)}} & \textit{(8.13)} & \textit{(6.19)} & \multicolumn{1}{c|}{\textit{(3.57)}} & \textit{(9.44)} & \textit{(5.68)} & \multicolumn{1}{c|}{\textit{(2.79)}} & \textit{(9.18)} & \textit{(4.89)} & \multicolumn{1}{c|}{\textit{(3.75)}} & \textit{(9.20)} & \textit{(4.55)} & \textit{(2.99)} \\
\multicolumn{1}{|c|}{Percentage} & 37 & 52 & \multicolumn{1}{c|}{45} & 70 & 64 & \multicolumn{1}{c|}{74} & 79 & 88 & \multicolumn{1}{c|}{97} &  &  & \multicolumn{1}{c|}{} &  &  &  \\ \hline
\multicolumn{16}{|c|}{} \\
\multicolumn{16}{|c|}{$RMSE \times 100$} \\
\multicolumn{1}{|c|}{OLS} & 20.59 & 14.35 & \multicolumn{1}{c|}{10.37} & 20.52 & 14.59 & \multicolumn{1}{c|}{10.27} & 20.73 & 14.76 & \multicolumn{1}{c|}{10.25} & 24.79 & 14.57 & \multicolumn{1}{c|}{10.36} & 20.95 & 14.70 & 10.37 \\
\multicolumn{1}{|c|}{} & \textit{(6.64)} & \textit{(3.39)} & \multicolumn{1}{c|}{\textit{(1.75)}} & \textit{(2.97)} & \textit{(1.51)} & \multicolumn{1}{c|}{\textit{(0.82)}} & \textit{(2.16)} & \textit{(0.92)} & \multicolumn{1}{c|}{\textit{(0.50)}} & \textit{(2.02)} & \textit{(0.85)} & \multicolumn{1}{c|}{\textit{(0.43)}} & \textit{(1.53)} & \textit{(0.75)} & \textit{(0.38)} \\
\multicolumn{1}{|c|}{GLS} & 15.27 & 10.01 & \multicolumn{1}{c|}{7.12} & 14.39 & 10.10 & \multicolumn{1}{c|}{7.10} & 14.59 & 10.25 & \multicolumn{1}{c|}{7.06} & 14.22 & 10.09 & \multicolumn{1}{c|}{7.16} & 14.52 & 10.19 & 7.14 \\
\multicolumn{1}{|c|}{} & \textit{(4.57)} & \textit{(2.27)} & \multicolumn{1}{c|}{\textit{(1.24)}} & \textit{(1.98)} & \textit{(1.07)} & \multicolumn{1}{c|}{\textit{(0.52)}} & \textit{(1.61)} & \textit{(0.71)} & \multicolumn{1}{c|}{\textit{(0.37)}} & \textit{(1.15)} & \textit{(0.54)} & \multicolumn{1}{c|}{\textit{(0.29)}} & \textit{(1.07)} & \textit{(0.55)} & \textit{(0.25)} \\
\multicolumn{1}{|c|}{FGLS} & 15.88 & 10.62 & \multicolumn{1}{c|}{7,51} & 17.80 & 12.57 & \multicolumn{1}{c|}{8.73} & 20.63 & 14.70 & \multicolumn{1}{c|}{10.22} &  &  & \multicolumn{1}{c|}{} &  &  &  \\
\multicolumn{1}{|c|}{} & \textit{(5.48)} & \textit{(2.56)} & \multicolumn{1}{c|}{\textit{(1.29)}} & \textit{(2.47)} & \textit{(1.43)} & \multicolumn{1}{c|}{\textit{(0.66)}} & \textit{(2.14)} & \textit{(0.89)} & \multicolumn{1}{c|}{\textit{(0.50)}} & \textit{} & \textit{} & \multicolumn{1}{c|}{\textit{}} & \textit{} & \textit{} & \textit{} \\
\multicolumn{1}{|c|}{FGLasso} & 16.47 & 10.99 & \multicolumn{1}{c|}{7.71} & 17.39 & 12.16 & \multicolumn{1}{c|}{8.21} & 18.32 & 12.55 & \multicolumn{1}{c|}{8.38} & 19.60 & 12.55 & \multicolumn{1}{c|}{8.60} & 18.95 & 12.79 & 8.68 \\
\multicolumn{1}{|c|}{} & \textit{(5.79)} & \textit{(2.49)} & \multicolumn{1}{c|}{\textit{(1.34)}} & \textit{(2.52)} & \textit{(1.43)} & \multicolumn{1}{c|}{\textit{(0.63)}} & \textit{(2.06)} & \textit{(0.78)} & \multicolumn{1}{c|}{\textit{(0.41)}} & \textit{(1.70)} & \textit{(0.74)} & \multicolumn{1}{c|}{\textit{(0.34)}} & \textit{(1.29)} & \textit{(0.66)} & \textit{(0.31)} \\
\multicolumn{1}{|c|}{Percentage} & 31 & 37 & \multicolumn{1}{c|}{42} & 66 & 75 & \multicolumn{1}{c|}{98} & 97 & 100 & \multicolumn{1}{c|}{100} &  &  & \multicolumn{1}{c|}{} &  &  &  \\ \hline
\multicolumn{1}{|l}{} & \multicolumn{1}{l}{} & \multicolumn{1}{l}{} & \multicolumn{1}{l}{} & \multicolumn{1}{l}{} & \multicolumn{1}{l}{} & \multicolumn{1}{l}{} & \multicolumn{1}{l}{} & \multicolumn{1}{l}{} & \multicolumn{1}{l}{} & \multicolumn{1}{l}{} & \multicolumn{1}{l}{} & \multicolumn{1}{l}{} & \multicolumn{1}{l}{} & \multicolumn{1}{l}{} & \multicolumn{1}{l|}{} \\
\multicolumn{16}{|c|}{Cross Validation} \\
\multicolumn{1}{|c|}{$\lambda_n \times 100$} & 6.34 & 4,83 & \multicolumn{1}{c|}{3.76} & 9.92 & 6.78 & \multicolumn{1}{c|}{7.78} & 17.93 & 15.81 & \multicolumn{1}{c|}{12.79} & 20.48 & 19.29 & \multicolumn{1}{c|}{16.77} & 24.29 & 21.90 & 18.55 \\
\multicolumn{1}{|c|}{} & \textit{(11.59)} & \textit{(8.89)} & \multicolumn{1}{c|}{\textit{(6.09)}} & \textit{(8.10)} & \textit{(4.82)} & \multicolumn{1}{c|}{\textit{(4.16)}} & \textit{(8.83)} & \textit{(5.93)} & \multicolumn{1}{c|}{\textit{(4.05)}} & \textit{(5.76)} & \textit{(4.91)} & \multicolumn{1}{c|}{\textit{(3.23)}} & \textit{(8.85)} & \textit{(4.36)} & \textit{(2.19)} \\ \hline
\end{tabular}}
\caption{AR(1)}
\label{tab4}
\begin{tablenotes}
\item \floatfoot{Notes: This table reports $\|\widehat{\beta}_{OLS}-\beta\|$, $\|\widehat{\beta}_{GLS}-\beta\|$, $\|\widehat{\beta}_{FGLS}-\beta\|$, $\|\widehat{\beta}_{FGLasso}-\beta\|$ by $l_\infty (\times 100)$, RMSE$(\times 100)$ when $\Omega$ is generated from AR(1) structure, see details in section 4.1. Percentage means the number of times $\|\widehat{\beta}_{FGLasso}-\beta\|\leq \|\widehat{\beta}_{FGLS}-\beta\|$ out of 100 simulations. $\lambda_n$ is the penalty parameter in (\ref{e.gl}) chosen by 5-fold cross validation. Values in parenthesis are the corresponding standard deviation$(\times 100)$. All results reported here are average of 100 replications.}
\end{tablenotes}
\end{threeparttable}
\end{table}

\begin{table}[htbp]
\centering
\begin{threeparttable}
\resizebox{\columnwidth}{!}{
\begin{tabular}{|cccccccccccccccc|}
\hline 
\multicolumn{1}{|c|}{} & \multicolumn{3}{c|}{N/T=0.1} & \multicolumn{3}{c|}{N/T=0.5} & \multicolumn{3}{c|}{N/T=1} & \multicolumn{3}{c|}{N/T=1.5} & \multicolumn{3}{c|}{N/T=2} \\ \cline{2-16} 
\multicolumn{1}{|c|}{T} & 50 & 100 & \multicolumn{1}{c|}{200} & 50 & 100 & \multicolumn{1}{c|}{200} & 50 & 100 & \multicolumn{1}{c|}{200} & 50 & 100 & \multicolumn{1}{c|}{200} & 50 & 100 & 200 \\
\multicolumn{1}{|c|}{N} & 5 & 10 & \multicolumn{1}{c|}{20} & 25 & 50 & \multicolumn{1}{c|}{100} & 50 & 100 & \multicolumn{1}{c|}{200} & 75 & 150 & \multicolumn{1}{c|}{300} & 100 & 200 & 400 \\ \hline
\multicolumn{16}{|c|}{} \\
\multicolumn{16}{|c|}{Element-wise maximum norm $l_\infty \times 100$} \\
\multicolumn{1}{|c|}{OLS} & 22.76 & 19.09 & \multicolumn{1}{c|}{15.09} & 32.28 & 24.79 & \multicolumn{1}{c|}{19.50} & 37.46 & 28.08 & \multicolumn{1}{c|}{21.21} & 38.10 & 29.97 & \multicolumn{1}{c|}{22.59} & 40.13 & 31.01 & 22.7 \\
\multicolumn{1}{|c|}{} & \textit{(8.16)} & \textit{(5.04)} & \multicolumn{1}{c|}{\textit{(3.22)}} & \textit{(7.30)} & \textit{(4.13)} & \multicolumn{1}{c|}{\textit{(2.97)}} & \textit{(7.57)} & \textit{(4.33)} & \multicolumn{1}{c|}{\textit{(2.81)}} & \textit{(5.84)} & \textit{(5.01)} & \multicolumn{1}{c|}{\textit{(2.58)}} & \textit{(6.07)} & \textit{(4.25)} & \textit{(2.6)} \\
\multicolumn{1}{|c|}{GLS} & 22.32 & 18.16 & \multicolumn{1}{c|}{14.68} & 31.09 & 24.01 & \multicolumn{1}{c|}{18.48} & 35.69 & 27.30 & \multicolumn{1}{c|}{20.37} & 36.57 & 28.64 & \multicolumn{1}{c|}{21.53} & 38.37 & 29.34 & 21.78 \\
\multicolumn{1}{|c|}{} & \textit{(7.96)} & \textit{(5.01)} & \multicolumn{1}{c|}{\textit{(2.97)}} & \textit{(7.49)} & \textit{(3.82)} & \multicolumn{1}{c|}{\textit{(2.72)}} & \textit{(6.98)} & \textit{(4.65)} & \multicolumn{1}{c|}{\textit{(2.63)}} & \textit{(5.94)} & \textit{(4.35)} & \multicolumn{1}{c|}{\textit{(2.36)}} & \textit{(6.11)} & \textit{(3.89)} & \textit{(2.61)} \\
\multicolumn{1}{|c|}{FGLS} & 23.64 & 19.14 & \multicolumn{1}{c|}{15.01} & 34.23 & 27.07 & \multicolumn{1}{c|}{20.9} & 37.84 & 28.33 & \multicolumn{1}{c|}{21.27} &  &  & \multicolumn{1}{c|}{} &  &  &  \\
\multicolumn{1}{|c|}{} & \textit{(8.50)} & \textit{(5.45)} & \multicolumn{1}{c|}{\textit{(3.16)}} & \textit{(7.65)} & \textit{(4.52)} & \multicolumn{1}{c|}{\textit{(2.98)}} & \textit{(7.66)} & \textit{(4.38)} & \multicolumn{1}{c|}{\textit{(2.89)}} & \textit{} & \textit{} & \multicolumn{1}{c|}{\textit{}} & \textit{} & \textit{} & \textit{} \\
\multicolumn{1}{|c|}{FGLasso} & 23.23 & 18.85 & \multicolumn{1}{c|}{14.79} & 32.68 & 25.04 & \multicolumn{1}{c|}{19.43} & 38.24 & 28.31 & \multicolumn{1}{c|}{21.10} & 38.80 & 30.23 & \multicolumn{1}{c|}{22.31} & 40.55 & 31.25 & 22.47 \\
\multicolumn{1}{|c|}{} & \textit{(8.39)} & \textit{(5.17)} & \multicolumn{1}{c|}{\textit{(3.10)}} & \textit{(7.13)} & \textit{(3.97)} & \multicolumn{1}{c|}{\textit{(2.82)}} & \textit{(8.17)} & \textit{(4.61)} & \multicolumn{1}{c|}{\textit{(2.85)}} & \textit{(6.40)} & \textit{(4.69)} & \multicolumn{1}{c|}{\textit{(2.34)}} & \textit{(6.36)} & \textit{(4.16)} & \textit{(2.56)} \\
\multicolumn{1}{|c|}{Percentage} & 59 & 55 & \multicolumn{1}{c|}{52} & 60 & 77 & \multicolumn{1}{c|}{70} & 57 & 50 & \multicolumn{1}{c|}{58} &  &  & \multicolumn{1}{c|}{} &  &  &  \\ \hline
\multicolumn{16}{|c|}{} \\
\multicolumn{16}{|c|}{$RMSE \times 100$} \\
\multicolumn{1}{|c|}{OLS} & 13.69 & 9.86 & \multicolumn{1}{c|}{6.92} & 14.08 & 9.97 & \multicolumn{1}{c|}{7.08} & 14.34 & 10.15 & \multicolumn{1}{c|}{7.06} & 14.32 & 10.03 & \multicolumn{1}{c|}{7.13} & 14.40 & 10.09 & 7.13 \\
\multicolumn{1}{|c|}{} & \textit{(4.61)} & \textit{(2.38)} & \multicolumn{1}{c|}{\textit{(1.09)}} & \textit{(2.08)} & \textit{(0.93)} & \multicolumn{1}{c|}{\textit{(0.54)}} & \textit{(1.63)} & \textit{(0.63)} & \multicolumn{1}{c|}{\textit{(0.40)}} & \textit{(1.25)} & \textit{(0.61)} & \multicolumn{1}{c|}{\textit{(0.30)}} & \textit{(1.02)} & \textit{(0.49)} & \textit{(0.24)} \\
\multicolumn{1}{|c|}{GLS} & 13.51 & 9.44 & \multicolumn{1}{c|}{6.71} & 13.61 & 9.53 & \multicolumn{1}{c|}{6.77} & 13.77 & 9.72 & \multicolumn{1}{c|}{6.75} & 13.79 & 9.60 & \multicolumn{1}{c|}{6.83} & 13.79 & 9.67 & 6.83 \\
\multicolumn{1}{|c|}{} & \textit{(4.48)} & \textit{(2.39)} & \multicolumn{1}{c|}{\textit{(1.03)}} & \textit{(2.08)} & \textit{(0.85)} & \multicolumn{1}{c|}{\textit{(0.51)}} & \textit{(1.52)} & \textit{(0.63)} & \multicolumn{1}{c|}{\textit{(0.35)}} & \textit{(1.20)} & \textit{(0.59)} & \multicolumn{1}{c|}{\textit{(0.28)}} & \textit{(1.04)} & \textit{(0.45)} & \textit{(0.23)} \\
\multicolumn{1}{|c|}{FGLS} & 14.35 & 9.85 & \multicolumn{1}{c|}{7.00} & 15.20 & 10.73 & \multicolumn{1}{c|}{7.66} & 14.47 & 10.19 & \multicolumn{1}{c|}{7.08} &  &  & \multicolumn{1}{c|}{} &  &  &  \\
\multicolumn{1}{|c|}{} & \textit{(4.76)} & \textit{(2.57)} & \multicolumn{1}{c|}{\textit{(1.10)}} & \textit{(2.32)} & \textit{(0.10)} & \multicolumn{1}{c|}{\textit{(0.56)}} & \textit{(1.60)} & \textit{(0.65)} & \multicolumn{1}{c|}{\textit{(0.40)}} & \textit{} & \textit{} & \multicolumn{1}{c|}{\textit{}} & \textit{} & \textit{} & \textit{} \\
\multicolumn{1}{|c|}{FGLasso} & 14.09 & 9.74 & \multicolumn{1}{c|}{6.87} & 14.26 & 9.96 & \multicolumn{1}{c|}{7.02} & 14.61 & 10.21 & \multicolumn{1}{c|}{7.03} & 14.46 & 10.10 & \multicolumn{1}{c|}{7.11} & 14.56 & 10.14 & 7.12 \\
\multicolumn{1}{|c|}{} & \textit{(4.75)} & \textit{(2.51)} & \multicolumn{1}{c|}{\textit{(1.10)}} & \textit{(2.11)} & \textit{(0.91)} & \multicolumn{1}{c|}{\textit{(0.53)}} & \textit{(1.66)} & \textit{(0.64)} & \multicolumn{1}{c|}{\textit{(0.38)}} & \textit{(1.30)} & \textit{(0.59)} & \multicolumn{1}{c|}{\textit{(0.30)}} & \textit{(1.04)} & \textit{(0.48)} & \textit{(0.2)} \\
\multicolumn{1}{|c|}{Percentage} & 61 & 50 & \multicolumn{1}{c|}{66} & 79 & 89 & \multicolumn{1}{c|}{100} & 46 & 51 & \multicolumn{1}{c|}{69} &  &  & \multicolumn{1}{c|}{} &  &  &  \\ \hline
\multicolumn{1}{|l}{} & \multicolumn{1}{l}{} & \multicolumn{1}{l}{} & \multicolumn{1}{l}{} & \multicolumn{1}{l}{} & \multicolumn{1}{l}{} & \multicolumn{1}{l}{} & \multicolumn{1}{l}{} & \multicolumn{1}{l}{} & \multicolumn{1}{l}{} & \multicolumn{1}{l}{} & \multicolumn{1}{l}{} & \multicolumn{1}{l}{} & \multicolumn{1}{l}{} & \multicolumn{1}{l}{} & \multicolumn{1}{l|}{} \\
\multicolumn{16}{|c|}{Cross Validation} \\
\multicolumn{1}{|c|}{$\lambda_n \times 100$} & 9.42 & 7.18 & \multicolumn{1}{c|}{5.64} & 16.55 & 12.50 & \multicolumn{1}{c|}{8.86} & 15.37 & 13.18 & \multicolumn{1}{c|}{10.56} & 16.32 & 15.48 & \multicolumn{1}{c|}{11.30} & 18.59 & 16.22 & 12.42 \\
\multicolumn{1}{|c|}{} & \textit{(8.33)} & \textit{(6.85)} & \multicolumn{1}{c|}{\textit{(4.56)}} & \textit{(10.03)} & \textit{(6.25)} & \multicolumn{1}{c|}{\textit{(3.16)}} & \textit{(10.96)} & \textit{(7.67)} & \multicolumn{1}{c|}{\textit{(2.56)}} & \textit{(11.77)} & \textit{(6.45)} & \multicolumn{1}{c|}{\textit{(2.56)}} & \textit{(11.54)} & \textit{(7.55)} & \textit{(2.00)}\\ \hline
\end{tabular}}
\caption{Dense}
\label{tab5}
\begin{tablenotes}
\item \floatfoot{Notes: This table reports $\|\widehat{\beta}_{OLS}-\beta\|$, $\|\widehat{\beta}_{GLS}-\beta\|$, $\|\widehat{\beta}_{FGLS}-\beta\|$, $\|\widehat{\beta}_{FGLasso}-\beta\|$ by $l_\infty (\times 100)$, RMSE$(\times 100)$ when $\Omega$ is generated from dense structure, see details in section 4.1. Percentage means the number of times $\|\widehat{\beta}_{FGLasso}-\beta\|\leq \|\widehat{\beta}_{FGLS}-\beta\|$ out of 100 simulations. $\lambda_n$ is the penalty parameter in (\ref{e.gl}) chosen by 5-fold cross validation. Values in parenthesis are the corresponding standard deviation$(\times 100)$. All results reported here are average of 100 replications.}
\end{tablenotes}
\end{threeparttable}
\end{table}
\end{document}